\DeclareMathOperator{\tr}{tr}
\newcommand{\id}{1}
\newcommand{\M}{\mathbb{M}}
\newcommand{\Proj}{\mathbb{P}}
\newcommand{\cX}{\mathcal{X}}
\newcommand{\cI}{\mathcal{I}}
\newcommand{\rec}{\mathrm{rec}}
\newcommand{\cP}{\mathcal{P}}
\newcommand{\cS}{\mathcal{S}}
\theoremstyle{plain}
\newtheorem{lemma}{Lemma}
\newtheorem{prop}[lemma]{Proposition}
\newtheorem{theorem}[lemma]{Theorem}
\theoremstyle{definition}
\begin{document}

\author{Mario Berta}
\affiliation{Department of Computing, Imperial College London}
\affiliation{Institute for Quantum Information and Matter, California Institute of Technology}

\author{Omar Fawzi}
\affiliation{Laboratoire de l'Informatique du Parall\'{e}lisme, \'{E}cole Normale Sup\'{e}rieure de Lyon}

\author{Marco Tomamichel}
\affiliation{Centre for Quantum Software and Information, University of Technology Sydney}



\title{\Large On Variational Expressions for Quantum Relative Entropies}

\begin{abstract}
Distance measures between quantum states like the trace distance and the fidelity can naturally be defined by optimizing a classical distance measure over all measurement statistics that can be obtained from the respective quantum states. In contrast, Petz showed that the measured relative entropy, defined as a maximization of the Kullback-Leibler divergence over projective measurement statistics, is strictly smaller than Umegaki's quantum relative entropy whenever the states do not commute. We extend this result in two ways. First, we show that Petz' conclusion remains true if we allow general positive operator valued measures. Second, we extend the result to R\'enyi relative entropies and show that for non-commuting states the sandwiched R\'enyi relative entropy is strictly larger than the measured R\'enyi relative entropy for $\alpha \in (\frac12, \infty)$, and strictly smaller for $\alpha \in [0,\frac12)$. The latter statement provides counterexamples for the data-processing inequality of the sandwiched R\'enyi relative entropy for $\alpha < \frac12$. Our main tool is a new variational expression for the measured R\'enyi relative entropy, which we further exploit to show that certain lower bounds on quantum conditional mutual information are superadditive.\\

Keywords: Quantum entropy, measured relative entropy, relative entropy of recovery, additivity in quantum information theory, operator Jensen inequality, convex optimization.\\

Mathematics Subject Classifications (2010): 94A17, 81Q99, 15A45.
\end{abstract}

\maketitle


\section{Measured Relative Entropy}

The relative entropy is the basic concept underlying various information measures like entropy, conditional entropy and mutual information. A thorough understanding of its quantum generalization is thus of preeminent importance in quantum information theory. We start by considering measured relative entropy, which is defined as a maximization of the Kullback-Leibler divergence over all measurement statistics that are attainable from two quantum states.


For a positive measure $Q$ on a finite set $\cX$ and a probability measure $P$ on $\cX$ that is absolutely continuous with respect to $Q$, denoted $P \ll Q$, the relative entropy or Kullback-Leibler divergence~\cite{kullback51} is defined as
\begin{align}
D(P\|Q) &:= \sum_{x \in \cX} P(x) \log \frac{P(x)}{Q(x)} \,,
\end{align}
where we understand $P(x) \log \frac{P(x)}{Q(x)} = 0$ whenever $P(x) = 0$. By continuity we define it as $+\infty$ if $P \not\ll Q$. (We use $\log$ to denote the natural logarithm.)

To extend this concept to quantum systems, Donald~\cite{donald86} as well as Hiai and Petz~\cite{hiai91} studied measured relative entropy. In the following we restrict ourselves to a $d$-dimensional Hilbert space for some $d \in \mathbb{N}$. Let us denote the set of positive semidefinite operators acting on this space by $\cP$ and the subset of density operators (with unit trace) by $\cS$. For a density operator $\rho \in \cS$ and $\sigma \in \cP$, we define two variants of measured relative entropy. The general \emph{measured relative entropy} is defined as
\begin{align}
D^{\M}(\rho\|\sigma):=\sup_{(\cX, M)  } D\big( P_{\rho,M} \big\| P_{\sigma,M} \big) \,,
\end{align}
where the optimization is over finite sets $\cX$ and positive operator valued measures (POVMs) $M$ on $\cX$. (More formally, $M$ is a map from $\cX$ to positive semidefinite operators and satisfies $\sum_{x \in \cX} M(x) = 1$, whereas $P_{\rho,M}$ is a measure on $\cX$ defined via the relation $P_{\rho,M}(x) = \tr[ M(x) \rho ]$ for any $x \in \cX$.) At first sight this definition seems cumbersome because we cannot restrict the size of $\cX$ that we optimize over. Therefore, following the works~\cite{donald86,petz86b,hiai91}, let us also consider the following \emph{projectively measured relative entropy}, which is defined as
\begin{align}\label{eq:relative_measured}
D^{\Proj}(\rho\|\sigma) := \sup_{ \{ P_i \}_{i=1}^d} \left\{ \sum_{i=1}^d \tr [P_i \rho] \log \frac{\tr [P_i \rho]}{\tr [P_i \sigma]} \right\}\,,
\end{align}
where the maximization is over all sets of mutually orthogonal projectors $\{ P_i \}_{i=1}^d$ and we spelled out the Kullback-Leibler divergence for discrete measures. Note that without loss of generality we can assume that these projectors are rank-$1$ as any course graining of the measurement outcomes can only reduce the relative entropy due to its data-processing inequality~\cite{lindblad75,uhlmann77}. Furthermore, the quantity is finite and the supremum is achieved whenever $\rho \ll \sigma$, which here denotes that the support of $\rho$ is contained in the support of $\sigma$. (To verify this, recall that the rank-$1$ projectors form a compact set and the divergence is lower semi-continuous.)

The first of the following two variational expressions for the (projectively) measured relative entropy is due to Petz~\cite{petzbook08}. Note that the second objective function is concave in $\omega$ so that the optimization problem has a particularly appealing form.

\begin{lemma}\label{lm:var_rel}
For $\rho \in \cS$ and $\sigma \in \cP$ non-zero, the following identities hold:
\begin{align}\label{eq:lemma3}
D^{\Proj}(\rho\|\sigma)=\sup_{\omega > 0}\ \tr[\rho \log \omega] - \log \tr[\sigma \omega]=\sup_{\omega > 0} \tr[\rho \log \omega]+1-\tr[\sigma \omega]\,.
\end{align}
Moreover, the suprema are achieved when $\rho$ and $\sigma$ are positive definite operators, i.e. $\rho > 0$ and $\sigma > 0$.
\end{lemma}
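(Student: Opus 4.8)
The plan is to establish the chain of equalities by a few separate inequalities, after dealing with a degenerate case. If $\rho \not\ll \sigma$ then $D^{\Proj}(\rho\|\sigma) = +\infty$, and choosing a unit vector $v$ with $\langle v|\sigma|v\rangle = 0$ but $\langle v|\rho|v\rangle > 0$ and testing with $\omega_t = t\,|v\rangle\langle v| + (\id - |v\rangle\langle v|)$ shows both variational suprema also diverge as $t \to \infty$; so assume henceforth $\rho \ll \sigma$, which makes $D^{\Proj}(\rho\|\sigma)$ finite. The equivalence of the two right-hand sides is then immediate: the first objective is invariant under the rescaling $\omega \mapsto t\omega$ for $t > 0$ (using $\tr\rho = 1$), so its supremum is unchanged if we restrict to $\omega$ normalized by $\tr[\sigma\omega] = 1$; on that set it agrees with the second objective, while the elementary inequality $1 - t \le -\log t$ shows the second objective never exceeds the first everywhere. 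Hence the two suprema coincide, and it remains to identify either one with $D^{\Proj}(\rho\|\sigma)$.

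For the inequality $\sup_{\omega>0}\big(\tr[\rho\log\omega] - \log\tr[\sigma\omega]\big) \ge D^{\Proj}(\rho\|\sigma)$ I would fix any measurement entering the definition of $D^{\Proj}$, say a family $\{P_i\}$ of mutually orthogonal projectors summing to the identity, and write $p_i = \tr[P_i\rho]$, $q_i = \tr[P_i\sigma]$, noting $p_i > 0$ forces $q_i > 0$ since $\rho \ll \sigma$. The natural test operator is $\omega_\varepsilon = \sum_{i : p_i > 0} (p_i/q_i)\,P_i + \varepsilon\big(\id - \sum_{i : p_i > 0} P_i\big) > 0$. A direct computation gives $\tr[\rho\log\omega_\varepsilon] = \sum_i p_i \log(p_i/q_i)$ independently of $\varepsilon$ (the term proportional to $\log\varepsilon$ cancels because $\sum_i p_i = 1$), while $\tr[\sigma\omega_\varepsilon] \to 1$ as $\varepsilon \to 0^+$; hence $\sup_{\omega>0}\big(\tr[\rho\log\omega] - \log\tr[\sigma\omega]\big) \ge \sum_i p_i\log(p_i/q_i)$, and taking the supremum over measurements gives the claim.

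The reverse inequality $\sup_{\omega>0}\big(\tr[\rho\log\omega] - \log\tr[\sigma\omega]\big) \le D^{\Proj}(\rho\|\sigma)$ is where the one genuinely non-routine idea enters: given $\omega > 0$ with spectral decomposition $\omega = \sum_k \lambda_k Q_k$, the measurement to test against is the spectral measurement $\{Q_k\}$ of $\omega$ itself (an admissible choice since these are at most $d$ mutually orthogonal projectors summing to $\id$, and may be refined to rank one if desired). With $p_k = \tr[Q_k\rho]$ and $q_k = \tr[Q_k\sigma]$, the value contributed by this measurement to $D^{\Proj}$ is $\sum_k p_k\log(p_k/q_k)$, and I need $\tr[\rho\log\omega] - \log\tr[\sigma\omega] = \sum_k p_k\log\lambda_k - \log\sum_k \lambda_k q_k \le \sum_k p_k\log(p_k/q_k)$. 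Rearranged, this reads $\sum_{k : p_k > 0} p_k \log\frac{\lambda_k q_k}{p_k} \le \log\sum_k \lambda_k q_k$, which follows from Jensen's inequality for the concave function $\log$ with the probability weights $\{p_k\}_{k : p_k > 0}$, together with the trivial bound $\sum_{k : p_k > 0} \lambda_k q_k \le \sum_k \lambda_k q_k$ and monotonicity of $\log$. I expect this single Jensen step — and the bookkeeping needed to keep every logarithm well-defined when some $p_k$ or $q_k$ vanishes (here $\rho \ll \sigma$ again ensures $p_k > 0 \Rightarrow q_k > 0$) — to be the only real content; everything else is manipulation.

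For the \emph{moreover} statement, assume $\rho > 0$ and $\sigma > 0$. By the rescaling reduction above, the supremum over all $\omega > 0$ equals the supremum of the concave objective $\omega \mapsto \tr[\rho\log\omega] + 1 - \tr[\sigma\omega]$ over the set $\{\omega \ge 0 : \tr[\sigma\omega] = 1\}$, which is compact because $\sigma > 0$ bounds $\tr\omega$ from above there. As $\omega$ approaches the boundary of the positive cone some eigenvalue tends to $0$, and since $\rho \ge \lambda_{\min}(\rho)\,\id > 0$ this forces $\tr[\rho\log\omega] \to -\infty$; hence the supremum is attained at an interior point $\omega^\star > 0$. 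Alternatively — and this also makes the maximizer explicit — one can use that $D^{\Proj}(\rho\|\sigma)$ is achieved by some rank-one measurement $\{|e_i\rangle\langle e_i|\}$ (compactness of rank-one projectors plus lower semicontinuity of the divergence, as noted in the text), and then $\omega^\star = \sum_i (p_i/q_i)\,|e_i\rangle\langle e_i|$ is positive definite precisely because $\rho, \sigma > 0$ force all $p_i, q_i > 0$; it already satisfies $\tr[\sigma\omega^\star] = 1$, so it simultaneously attains both variational suprema.
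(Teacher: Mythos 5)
Your proposal is correct and follows essentially the same route as the paper: both reduce the variational problem to the eigenbasis of $\omega$, where the optimal eigenvalues are $\tr[P_i\rho]/\tr[P_i\sigma]$, handle the equivalence of the two objectives identically via scale invariance plus $\log t \le t-1$, and obtain attainment for $\rho,\sigma>0$ from the achieved projective measurement. The only cosmetic difference is that you bound the first objective by a classical Jensen step against the spectral measurement of $\omega$, whereas the paper optimizes the concave second objective eigenvalue-by-eigenvalue; the content is the same.
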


\begin{proof}
If $\rho \not\ll \sigma$ then the two expressions in the suprema of~\eqref{eq:lemma3} are unbounded, as expected. We now assume that $\rho \ll \sigma$. Let us consider the second expression in~\eqref{eq:lemma3}. We write the supremum over $\omega > 0$ as two suprema over $\{ P_i \}_{i=1}^d$ and $\{ \omega_i \}_{i=1}^d$, where $\omega_i > 0$ are the eigenvalues of $\omega$ corresponding to the eigenvectors given by rank-$1$ projectors $P_i$. Using the fact that $\tr[\rho] = 1$, we find
\begin{align}\label{eq:supsup1}
\sup_{\omega > 0} \tr[\rho \log \omega] + 1 - \tr[\sigma \omega] &= \sup_{ \{ P_i \}_{i=1}^d} \sup_{\{ \omega_i \}_{i=1}^d} \sum_{i=1}^d \tr [P_i \rho] (\log  \omega_i + 1) - \tr[P_i \sigma] \omega_i\,.
\end{align}
For $i \in [d]$ such that $\tr[P_i \sigma] = 0$, we also have $\tr[P_i \rho] = 0$, and thus the corresponding term is zero. When $\tr[P_i \sigma] > 0$, let us first consider $\tr[P_i \rho] = 0$. In this case, the supremum of $i$-th term is $\sup_{\omega_i > 0} -\tr[P_i \sigma] \omega_i = 0$ achieved in the limit $\omega_i \to 0$. Now in the case $\tr[P_i \rho] > 0$ (which is the only possible case when $\rho > 0$), observe that the expression is concave in $\omega_i$, the inner supremum is achieved by the local maxima at $\omega_i = \tr [P_i \rho] / \tr [P_i \sigma]$. Plugging this into~\eqref{eq:supsup1}, we find
\begin{align}
\sup_{ \{ P_i \}_{i=1}^d} \sum_{i=1}^d \tr [P_i \rho] \log \frac{\tr [P_i \rho]}{\tr [P_i \sigma]} + \tr[P_i \rho] - \tr[P_i \rho] = \sup_{ \{ P_i \}_{i=1}^d} \sum_{i=1}^d \tr [P_i \rho] \log \frac{\tr [P_i \rho]}{\tr [P_i \sigma]} \,.
\end{align}
This is the expression for the measured relative entropy in~\eqref{eq:relative_measured}.
The remaining supremum is achieved because the set of rank-1 projectors is compact and the divergence is lower semi-continuous.
    
It remains to show that the two variational expressions in~\eqref{eq:lemma3} are equivalent. We have $\log (1+x) \leq x$ for all $x > -1$ and, thus, $-\log \tr[\sigma \omega] \geq 1 - \tr[\sigma \omega]
$ for all $\omega > 0$. This yields
\begin{align}
\sup_{\omega > 0}\  \tr[\rho \log \omega] - \log \tr[\sigma \omega] \geq \sup_{\omega > 0} \tr[\rho \log \omega] + 1 - \tr[\sigma \omega] \,.
\end{align}
Now note that the expression on the left hand side is invariant under the substitution $\omega \to \gamma \omega$ for $\gamma > 0$. Hence, as $\tr[\sigma \omega] > 0$ for $\omega > 0$ and non-zero $\sigma$, we can add the normalization constraint $\tr[\sigma\omega] = 1$ and we have
\begin{align}
\sup_{\omega > 0}\  \tr[\rho \log \omega] - \log \tr[\sigma \omega] 
&= \sup_{\substack{\omega > 0 \\ \tr[\sigma \omega] = 1}} \tr[\rho \log \omega] - \log \tr[\sigma \omega] \\
&\leq \sup_{\omega > 0} \tr[\rho \log \omega] + 1 - \tr[\sigma \omega] \,,
\end{align}
where we used that $\log \tr[\sigma \omega] = 1 - \tr[\sigma \omega]$ when $\tr[\sigma \omega] = 1$.
\end{proof}

Using this variational expression, we are able to answer a question left open by Donald~\cite[Sec.~3]{donald86} as well as Hiai and Petz~\cite[Sec.~1]{hiai91}, namely whether the two definitions of measured relative entropy are equal. 

\begin{theorem}\label{th:general_rel}
For $\rho \in \cS$ and $\sigma \in \cP$, we have $D^{\Proj}(\rho\|\sigma) = D^{\M}(\rho\|\sigma)$.
\end{theorem}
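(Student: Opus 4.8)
The plan is to prove the nontrivial inequality $D^{\M}(\rho\|\sigma)\le D^{\Proj}(\rho\|\sigma)$; the reverse is immediate since every orthogonal resolution of the identity is in particular a POVM. First I would dispose of the degenerate cases: if $\sigma=0$ or, more generally, $\rho\not\ll\sigma$, then the two-outcome measurement $\{\Pi,\id-\Pi\}$ with $\Pi$ the support projector of $\sigma$ already shows that both sides equal $+\infty$ (refine to rank-$1$ projectors for $D^{\Proj}$), so one may assume $\rho\ll\sigma\neq 0$. A small fact worth isolating here is that then $P_{\rho,M}\ll P_{\sigma,M}$ for every POVM $M$: from $\tr[M(x)\sigma]=0$ one gets $M(x)^{1/2}\sqrt\sigma=0$, hence $M(x)^{1/2}$ annihilates $\operatorname{supp}\sigma\supseteq\operatorname{supp}\rho$, so $\tr[M(x)\rho]=0$.

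Fix a POVM $(\cX,M)$ and write $P:=P_{\rho,M}$, $Q:=P_{\sigma,M}$. The idea is to turn a ``test function'' $f\colon\cX\to\mathbb R$ on the classical side into a test operator on the quantum side, namely $\omega_f:=\sum_{x\in\cX}e^{f(x)}M(x)$. Since $\cX$ is finite, $M(x)\ge 0$ and $\sum_x M(x)=\id$, we have $\omega_f\ge(\min_x e^{f(x)})\,\id>0$, so $\omega_f$ is admissible in the first variational formula of Lemma~\ref{lm:var_rel}; and $\tr[\sigma\omega_f]=\sum_x Q(x)e^{f(x)}$ by linearity. The crux — and what I expect to be essentially the only real obstacle — is the inequality $\tr[\rho\log\omega_f]\ge\sum_x P(x)f(x)$. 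I would obtain it from the operator Jensen inequality: let $\mathcal M^{\dagger}\colon Y\mapsto\sum_x\langle x|Y|x\rangle M(x)$ be the adjoint of the measurement channel (a unital completely positive map) and $A:=\sum_x f(x)|x\rangle\!\langle x|$, so that $\omega_f=\mathcal M^{\dagger}(e^{A})$; operator concavity of $\log$ on $(0,\infty)$ then gives $\log\mathcal M^{\dagger}(e^{A})\ge\mathcal M^{\dagger}(A)=\sum_x f(x)M(x)$, and pairing with $\rho\ge 0$ yields the claim.

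Putting these together with Lemma~\ref{lm:var_rel}, for every $f$,
\begin{align}
D^{\Proj}(\rho\|\sigma)\ \ge\ \tr[\rho\log\omega_f]-\log\tr[\sigma\omega_f]\ \ge\ \sum_{x}P(x)f(x)-\log\sum_{x}Q(x)e^{f(x)}\,.
\end{align}
The supremum of the right-hand side over $f$ equals $D(P\|Q)$: it is an upper bound by the classical Gibbs (Donsker–Varadhan) variational principle — which is itself just the special case of Lemma~\ref{lm:var_rel} with $\omega$ diagonal — and it is attained, in the limit, by $f(x)=\log\frac{P(x)}{Q(x)}$ on $\operatorname{supp}P$ with $f(x)\to-\infty$ elsewhere, for which $\sum_x P(x)f(x)=D(P\|Q)$ and $\sum_x Q(x)e^{f(x)}\to 1$. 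Hence $D(P_{\rho,M}\|P_{\sigma,M})\le D^{\Proj}(\rho\|\sigma)$ for every finite $\cX$ and POVM $M$, and taking the supremum over all POVMs gives $D^{\M}(\rho\|\sigma)\le D^{\Proj}(\rho\|\sigma)$, which completes the proof.
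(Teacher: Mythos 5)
Your proposal is correct and takes essentially the same route as the paper: you turn the POVM into a quantum test operator, apply the operator Jensen (Davis--Choi) inequality for the operator-concave logarithm, and compare with the variational formula of Lemma~\ref{lm:var_rel}. The only difference is cosmetic --- you argue for an arbitrary classical dual function $f$ and then invoke the Gibbs/Donsker--Varadhan principle, whereas the paper plugs in the optimizer directly: your $\omega_f$ with $f=\log\frac{P_{\rho,M}}{P_{\sigma,M}}$ is exactly the paper's choice $\omega=\sum_{x}M(x)\frac{P_{\rho,M}(x)}{P_{\sigma,M}(x)}$, for which $\tr[\sigma\omega]=1$.
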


\begin{proof}
The direction `$\leq$' holds trivially. Moreover, if $\rho \not\ll \sigma$, we can choose $P_1$ to be a rank-$1$ projector such that $\tr[P_1 \rho] > 0$ and $\tr[P_1 \sigma] = 0$ and thus $D^{\Proj}(\rho \| \sigma) = D^{\M}(\rho \| \sigma) = + \infty$.

It remains to show the direction `$\geq$' when $\rho \ll \sigma$ holds. Let $(\cX, M)$ be a POVM. Recall that the distribution $P_{\rho, M}$ is defined by $P_{\rho, M}(x) = \tr[M(x) \rho]$. Introducing $\bar{\cX} = \{ x \in \cX : P_{\rho, M}(x) P_{\sigma, M}(x) > 0\}$, we can write
\begin{align}
D(P_{\rho, M} \| P_{\sigma, M}) &= \sum_{x \in \bar{\cX}} P_{\rho,M}(x) \log \frac{P_{\rho, M}(x)}{P_{\sigma, M}(x)} \\
&= \tr\!\left[ \rho \sum_{x \in \bar{\cX}} M(x) \log \frac{P_{\rho,M}(x)}{P_{\sigma,M}(x)} \right] \\
&= \tr\!\left[ \rho \sum_{x \in \bar{\cX}} \sqrt{M(x)} \log \left( \frac{P_{\rho,M}(x)}{P_{\sigma,M}(x)} \cdot \id \right) \sqrt{M(x)} \right] \,.
\end{align}
Now observe that for any $x \in \bar{\cX}$, the spectrum of the operator $\frac{P_{\rho,M}(x)}{P_{\sigma,M}(x)} \cdot \id$ is included in $(0, \infty)$. As a result, we can apply the operator Jensen inequality for the function $\log$, which is operator concave on $(0,\infty)$ and get
\begin{align}\label{eq:afterjenson1}
D(P_{\rho, M} \| P_{\sigma, M}) &\leq \tr\! \left[ \rho \log \left( \sum_{x \in \bar{\cX}} M(x) \frac{ P_{\rho,M}(x)}{ P_{\sigma,M}(x)} \right) \right] \,.
\end{align}
Now we simply choose 
\begin{align}
\omega = \sum_{x \in \bar{\cX}} M(x) \frac{P_{\rho,M}(x)}{P_{\sigma,M}(x)}
\quad\text{so that}\quad 
\tr[\sigma \omega] = \sum_{x \in \bar{\cX}} P_{\sigma,M}(x) \frac{P_{\rho,M}(x)}{ P_{\sigma,M}(x)} = \sum_{x \in \cX} P_{\rho,M}(x) = 1\,,
\end{align}
and which allows to further bound~\eqref{eq:afterjenson1} by $\tr[\rho \log \omega] + 1 - \tr[\sigma\omega]$. Comparing this with the variational expression for the measured relative entropy in Lemma~\ref{lm:var_rel} yields the desired inequality.
\end{proof}

Hence, the measured relative entropy, $D^{\M}(\rho\|\sigma)$, achieves its maximum for projective rank-$1$ measurements and can be evaluated using the concave optimization problem in Lemma~\ref{lm:var_rel}.


\section{Measured R\'enyi Relative Entropy}\label{sec:renyi_measured}

Here we extend the results of the previous section to R\'enyi divergence. Using the same notation as in the previous section, for $\alpha \in (1,\infty)$ we define the R\'enyi divergence~\cite{renyi61} as
\begin{align}
D_{\alpha}(P\|Q) := \frac{1}{\alpha-1} \log \sum_{x \in \cX} P\big( x\big) \left( \frac{P(x)}{ Q(x)} \right)^{\alpha-1}
\end{align}
if $P \ll Q$ and as $+\infty$ if $P \not\ll Q$. For $\alpha \in (0,1)$ we rewrite the sum as 
\begin{align}\label{eq:verdurenyi}
\sum_{x \in \cX} P(x)^{\alpha} Q(x)^{1-\alpha} \,.
\end{align}
Hence we see that absolute continuity $P \ll Q$ is not necessary to keep $D_{\alpha}$ finite for $\alpha < 1$.  However, the R\'enyi divergence instead diverges to $+\infty$ when $P$ and $Q$ are orthogonal.\footnote{$P$ and $Q$ are orthogonal, denoted $P \perp Q$, if there exists an $A \subseteq \cX$ such that $P(A) = 1$ and $Q(A) = 0$.}  It is well known that the R\'enyi divergence converges to the Kullback-Leibler divergence when $\alpha \to 1$ and we thus set $D \equiv D_1$. Moreover, in the limit $\alpha \to \infty$ we find the max-divergence $D_{\infty}(P\|Q) = \sup_{x \in \cX} \log \frac{P(x)}{Q(x)}$.

Let us now define the \emph{measured R\'enyi relative entropy} as before, namely
\begin{align}
D^{\M}_{\alpha}(\rho\|\sigma) := \sup_{(\cX, M)} D_{\alpha}\big( P_{\rho,M} \big\| P_{\sigma,M} \big) \,.
\end{align}
We will later show that this is equivalent to the following \emph{projectively measured R\'enyi relative entropy}, which we define here for $\alpha \in (1,\infty)$ as
\begin{align}
D^{\Proj}_{\alpha}(\rho\|\sigma) := \frac{1}{\alpha-1} \log Q^{\Proj}_{\alpha}(\rho\|\sigma) \,,
\quad \textrm{with} \quad Q_{\alpha}^{\Proj}(\rho\|\sigma) := \sup_{ \{P_i\}_{i=1}^d } \left\{ \sum_i \tr[P_i \rho]^{\alpha} \tr[P_i \sigma]^{1-\alpha} \right\} \,, \label{eq:q1}
\end{align}
and analogously for $\alpha \in (0,1)$ with
\begin{align}
Q_{\alpha}^{\Proj}(\rho\|\sigma) := \min_{ \{P_i\}_{i=1}^d } \left\{ \sum_i \tr[P_i \rho]^{\alpha} \tr[P_i \sigma]^{1-\alpha} \right\} \label{eq:q2} \,.
\end{align}
Note that the supremum in~\eqref{eq:q1} is achieved and $D^{\Proj}_{\alpha}(\rho\|\sigma)$ is finite whenever $\rho \ll \sigma$. Similarly, the minimum in~\eqref{eq:q2} is non-zero and $D^{\Proj}_{\alpha}(\rho\|\sigma)$ is finite whenever $\rho \not\perp \sigma$, i.e.\ when the two states are not orthogonal. 

Next we give variational expressions for $Q_{\alpha}^{\Proj}(\rho\|\sigma)$ similar to the variational characterization of the measured relative entropy in Lemma~\ref{lm:var_rel}. 

\begin{lemma}\label{lm:var_renyi}
For $\rho \in \cS$ and $\sigma \in \cP$, the following identities hold:
\begin{align}
Q_{\alpha}^{\Proj}(\rho\|\sigma) &= \begin{cases} 
\displaystyle \inf_{\omega > 0} \alpha \tr\big[\rho \omega\big] + (1-\alpha) \tr\Big[ \sigma \omega^{\frac{\alpha}{\alpha-1}} \Big] & \textrm{for $\alpha \in (0, \frac12)$} \\
\displaystyle \inf_{\omega > 0} \alpha \tr\Big[\rho \omega^{1 - \frac{1}{\alpha}}\Big] + (1-\alpha) \tr[ \sigma \omega ] & \textrm{for $\alpha \in [\frac12,1)$} \\
\displaystyle \sup_{\omega > 0} \alpha \tr\Big[\rho \omega^{1 - \frac{1}{\alpha}}\Big] + (1-\alpha) \tr[ \sigma \omega ] & \textrm{for $\alpha \in (1, \infty)$}
\end{cases} \label{eq:notscalable} \\
&= \begin{cases} \label{eq:gen_alberti}
\displaystyle \inf_{\omega > 0} \tr\left[\rho\omega\right]^\alpha\tr\left[\sigma\omega^{\frac{\alpha}{\alpha-1}}\right]^{1-\alpha} & \textrm{for $\alpha \in (0,1)$} \\
\displaystyle \sup_{\omega > 0} \tr\Big[\rho \omega^{1 - \frac{1}{\alpha}}\Big]^\alpha\tr[ \sigma \omega ]^{1-\alpha} & \textrm{for $\alpha \in (1, \infty)$}
\end{cases}
\quad .
\end{align}
The infima and suprema are achieved when $\rho > 0$ and $\sigma > 0$.
\end{lemma}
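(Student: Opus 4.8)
The plan is to follow the proof of Lemma~\ref{lm:var_rel} almost verbatim: diagonalise the optimisation variable $\omega>0$, so that for fixed eigenprojectors each operator optimisation becomes a separable optimisation over the eigenvalues of $\omega$, and solve the resulting one-variable problem by elementary calculus. Write $\omega=\sum_{i=1}^d\omega_iP_i$ with $\{P_i\}_{i=1}^d$ a family of mutually orthogonal rank-$1$ projectors summing to $\id$ and $\omega_i>0$ (every such combination being a positive definite $\omega$); by functional calculus $\tr[\rho\,\omega^c]=\sum_i\omega_i^c\,\tr[P_i\rho]$ for any real exponent $c$, and likewise for $\sigma$. Setting $a_i:=\tr[P_i\rho]$ and $b_i:=\tr[P_i\sigma]$, each objective in~\eqref{eq:notscalable} becomes a sum of decoupled terms, so the supremum ($\alpha>1$) or infimum ($\alpha<1$) over $\{\omega_i\}$ passes inside the sum and reduces to optimising $t\mapsto\alpha\,a\,t^{1-1/\alpha}+(1-\alpha)\,b\,t$ over $t>0$ (respectively $t\mapsto\alpha\,a\,t+(1-\alpha)\,b\,t^{\alpha/(\alpha-1)}$ in the $\alpha\in(0,\tfrac{1}{2})$ representative). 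For $a,b>0$ this function is concave when $\alpha>1$ and convex when $\alpha<1$, with interior critical point at $t^\ast=(a/b)^{\alpha}$ (respectively $(a/b)^{\alpha-1}$), where its value is exactly $a^\alpha b^{1-\alpha}$; the degenerate cases $a=0$ or $b=0$ are checked directly and, under the usual conventions, also yield $a^\alpha b^{1-\alpha}$ (possibly $+\infty$). Summing over $i$ and taking the supremum ($\alpha>1$) or infimum ($\alpha<1$) over $\{P_i\}$ then returns $\sum_i\tr[P_i\rho]^\alpha\tr[P_i\sigma]^{1-\alpha}$, i.e.\ $Q_\alpha^\Proj(\rho\|\sigma)$ as defined in~\eqref{eq:q1}--\eqref{eq:q2}; in particular the degenerate terms reproduce $Q_\alpha^\Proj=+\infty$ for $\rho\not\ll\sigma$, $\alpha>1$, and finite values for $\rho\not\perp\sigma$, $\alpha<1$, consistently with the definitions. (Both representatives are in fact valid for all $\alpha\in(0,1)$; the case split in~\eqref{eq:notscalable} merely selects the one whose objective is moreover operator convex/concave in $\omega$, which motivates distinguishing the ranges but plays no role in the value identity.)

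For the scalable expressions~\eqref{eq:gen_alberti}, observe that each objective there is invariant under $\omega\to\gamma\omega$ for $\gamma>0$, and that optimising the corresponding objective of~\eqref{eq:notscalable} over this one-parameter rescaling is again the scalar problem above: for $\alpha>1$, $\sup_{\gamma>0}\big(\alpha\gamma^{1-1/\alpha}\tr[\rho\omega^{1-1/\alpha}]+(1-\alpha)\gamma\tr[\sigma\omega]\big)=\tr[\rho\omega^{1-1/\alpha}]^\alpha\tr[\sigma\omega]^{1-\alpha}$, with no degeneracy since $\tr[\rho\omega^{1-1/\alpha}]>0$ and $\tr[\sigma\omega]>0$ whenever $\omega>0$ and $\sigma\neq0$, and similarly with $\inf$ for $\alpha\in(0,1)$ starting from the representative containing $\tr[\sigma\omega^{\alpha/(\alpha-1)}]$. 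Applying $\sup_\omega$ (respectively $\inf_\omega$) to both sides turns~\eqref{eq:notscalable} into~\eqref{eq:gen_alberti}. (The edge case $\sigma=0$ is trivial and checked directly.)

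Finally, when $\rho>0$ and $\sigma>0$ every $a_i,b_i$ is strictly positive, so the scalar optima $\omega_i^\ast$ are finite and positive and the map $\{P_i\}\mapsto\sum_i a_i^\alpha b_i^{1-\alpha}$ is continuous on the compact set of orthonormal $d$-tuples of rank-$1$ projectors (the same compactness already invoked after~\eqref{eq:q2}); hence an optimal $\{P_i^\ast\}$ exists, $\omega^\ast:=\sum_i\omega_i^\ast P_i^\ast>0$ achieves the extremum in~\eqref{eq:notscalable}, and $\omega^\ast$ (or any positive rescaling of it) achieves it in~\eqref{eq:gen_alberti}. I expect the only real work to be bookkeeping: keeping the exponents $1-1/\alpha$ and $\alpha/(\alpha-1)$ straight across the three regimes, tracking the sign flips that swap $\sup\leftrightarrow\inf$ and concavity $\leftrightarrow$ convexity across $\alpha=1$, and — the point demanding the most care — verifying that the degenerate scalar cases $a_i=0$ or $b_i=0$ reproduce exactly the $+\infty$/finite dichotomy hard-wired into the definition of $Q_\alpha^\Proj$.
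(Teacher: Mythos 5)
Your proposal is correct and follows essentially the same route as the paper's proof: diagonalise $\omega$, decouple the optimisation into the eigenvalues for fixed rank-$1$ eigenprojectors, solve the resulting scalar problem (same critical points $\omega_i=(\tr[P_i\rho]/\tr[P_i\sigma])^\alpha$ and same treatment of the degenerate terms, including the $+\infty$ case for $\alpha>1$ when $\rho\not\ll\sigma$), and then pass to the scale-invariant expressions, with achievement for $\rho,\sigma>0$ by compactness. The only cosmetic difference is that you derive \eqref{eq:gen_alberti} by optimising the objective of \eqref{eq:notscalable} over the rescaling $\omega\to\gamma\omega$ in one step, whereas the paper splits this into a weighted arithmetic-geometric mean (resp.\ Bernoulli) bound plus a specific choice of the scaling parameter --- the same computation in disguise.
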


The expressions \eqref{eq:gen_alberti} can be seen as a generalization of Alberti's theorem~\cite{alberti83} for the fidelity (which corresponds to $\alpha=1/2$) to general $\alpha\in(0,1)\cup(1,\infty)$.

\begin{proof}
We first show the identity~\eqref{eq:notscalable}. Let us discuss the case $\alpha\in (0, 1)$ in detail. Note that the two expressions given for $\alpha \in (0,\frac12)$ and $\alpha \in [\frac12,1)$ are equivalent by the transformation $\omega \mapsto \omega^{\frac{\alpha}{\alpha-1}}$ (the reason for the different ways of writing is to see that the expressions are convex in $\omega$, which will be useful later, in particular in Theorem~\ref{th:general_renyi}). We first write
\begin{align}
\inf_{\omega > 0} \alpha \tr\Big[\rho \omega^{1 - \frac{1}{\alpha}}\Big] + (1-\alpha) \tr[ \sigma \omega ] &=
\inf_{ \{ P_i \}_{i=1}^d} \inf_{\{ \omega_i \}_{i=1}^d} \alpha \sum_i \omega_i^{1-\frac{1}{\alpha}} \tr[P_i \rho] + (1-\alpha) \sum_i \omega_i \tr[P_i \sigma]\,.
\end{align} 
Let $i \in [d]$ be such that $\tr[P_i \rho]$ and $\tr[P_i \sigma]$ are both strictly positive (which is the case when $\rho > 0$ and $\sigma > 0$). Then
a local (and thus global) minimum for $\omega_i$ is easily found at the point where 
\begin{align}
\alpha \Big(1-\frac1{\alpha} \Big) \omega_i^{-\frac{1}{\alpha}} \tr[P_i\rho]  + (1-\alpha) \tr[P_i\sigma] = 0 \quad \implies \quad \omega_i = \left( \frac{\tr[P_i\rho]}{\tr[P_i\sigma]} \right)^{\alpha} \,.
\end{align}
If both $\tr[P_i \rho] = \tr[P_i \sigma] = 0$ we can chose $\omega_i$ arbitrarily.
If only $\tr[P_i \rho] = 0$ the infimum is achieved in the limit $\omega_i \to 0$, and if only $\tr[P_i \sigma] = 0$ in the limit $\omega_i \to \infty$. In all these cases the infimum of the $i$-th term is zero. Furthermore, it is achieved for a finite, non-zero $\omega_i$ when $\rho > 0$ and $\sigma > 0$.
Plugging this solution into the above expression yields
\begin{align}
\inf_{ \{ P_i \}_{i=1}^d} \sum_{i \in [d]} \tr[P_i\rho]^{\alpha} \tr[P_i\sigma]^{1-\alpha}\,.
\end{align}
This infimum is always achieved since the set we optimize over is compact. Comparing this with the definition of $Q_{\alpha}^{\Proj}(\rho\|\sigma)$ yields the first equality. 

For the case $\alpha \in (1,\infty)$, when $\rho \ll \sigma$, the proof is analogous to the previous argument. Otherwise, it is simple to see that the supremum is $+ \infty$.

Now we show~\eqref{eq:gen_alberti}. Using~\eqref{eq:notscalable} and the weighted arithmetic-geometric mean inequality we have
\begin{align}\label{eq:scalable}
Q_{\alpha}^{\Proj}(\rho\|\sigma)\geq\inf_{\omega > 0} \tr\left[\rho\omega^{1-\frac{1}{\alpha}}\right]^\alpha\tr\left[\sigma\omega\right]^{1-\alpha} \quad\textrm{for $\alpha \in (0,1)$}\,.
\end{align}
However, for any feasible $\omega > 0$ in~\eqref{eq:notscalable} and $\lambda > 0$, $\lambda \omega > 0$ is also feasible and choosing $\lambda=\tr\big[\rho\omega^{1-\frac{1}{\alpha}}\big]^\alpha\tr[\sigma\omega]^{-\alpha}$ shows that~\eqref{eq:notscalable} cannot exceed~\eqref{eq:scalable}.
Similarly, by Bernoulli's inequality,
\begin{align}
Q_{\alpha}^{\Proj}(\rho\|\sigma)\leq\sup_{\omega > 0}\tr\left[\rho\omega^{1-\frac{1}{\alpha}}\right]^\alpha\tr\left[\sigma\omega\right]^{1-\alpha}\quad\textrm{for $\alpha \in (1,\infty)$}\,. 
\end{align}
And the same argument as above yields the equality. 
\end{proof}

As for the measured relative entropy, the restriction to rank-$1$ projective measurements is in fact not restrictive at all.

\begin{theorem}\label{th:general_renyi}
For $\rho \in \cS$ and $\sigma \in \cP$, we have $D_{\alpha}^{\Proj}(\rho\|\sigma) = D_{\alpha}^{\M}(\rho\|\sigma)$.
\end{theorem}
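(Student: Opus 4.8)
The plan is to imitate the proof of Theorem~\ref{th:general_rel}, this time invoking the variational formula~\eqref{eq:notscalable} of Lemma~\ref{lm:var_renyi} in whichever of its equivalent shapes makes the relevant power of $\omega$ operator convex, respectively operator concave, on $(0,\infty)$. As in Theorem~\ref{th:general_rel}, the inequality $D^{\Proj}_{\alpha}(\rho\|\sigma)\le D^{\M}_{\alpha}(\rho\|\sigma)$ holds trivially, since $D^{\Proj}_{\alpha}(\rho\|\sigma)$ is exactly the supremum of $D_{\alpha}(P_{\rho,M}\|P_{\sigma,M})$ over projective rank-$1$ measurements $M$; here one uses that $x\mapsto\frac{1}{\alpha-1}\log x$ is increasing for $\alpha>1$ and decreasing for $\alpha\in(0,1)$, which is precisely why $Q^{\Proj}_{\alpha}(\rho\|\sigma)$ is a supremum over projectors when $\alpha>1$ and an infimum when $\alpha<1$. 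For the reverse inequality we may assume $\rho\ll\sigma$ when $\alpha>1$ (otherwise a rank-$1$ projector $P_1$ with $\tr[P_1\rho]>0$ and $\tr[P_1\sigma]=0$ forces both sides to $+\infty$), and, writing $p_x:=\tr[M(x)\rho]$ and $q_x:=\tr[M(x)\sigma]$, it suffices to prove that for every POVM $(\cX,M)$
\begin{align}\label{eq:planreduction}
Q^{\Proj}_{\alpha}(\rho\|\sigma)\ \ge\ \sum_{x\in\cX}p_x^{\alpha}q_x^{1-\alpha}\ \ \text{if}\ \alpha>1,\qquad Q^{\Proj}_{\alpha}(\rho\|\sigma)\ \le\ \sum_{x\in\cX}p_x^{\alpha}q_x^{1-\alpha}\ \ \text{if}\ \alpha\in(0,1),
\end{align}
since applying $\frac{1}{\alpha-1}\log$ and taking a supremum over $(\cX,M)$ then gives $D^{\Proj}_{\alpha}(\rho\|\sigma)\ge D^{\M}_{\alpha}(\rho\|\sigma)$.

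The crux is to feed one well-chosen test operator into the variational formula, namely $\omega_\star:=\sum_{x\in\cX}c_x\,M(x)$ with positive scalars $c_x$; because $\sum_x M(x)=\id$ this is automatically positive definite, $\omega_\star\ge(\min_x c_x)\,\id>0$, so it is a legitimate feasible point with no regularization needed. Each shape of the objective in~\eqref{eq:notscalable} is the sum of a term that is linear in $\omega$ and a term of the form $\tr[\,\cdot\,\omega^{r}]$ for one exponent $r$; evaluated at $\omega_\star$, the linear term is computed exactly, while for the power term I would apply the operator Jensen inequality with $a_x=\sqrt{M(x)}$ (so that $\sum_x a_x^{*}a_x=\id$) and $A_x=c_x\,\id$, whose spectra lie in $(0,\infty)$, obtaining $\omega_\star^{\,r}\ge\sum_x c_x^{\,r}M(x)$ when $t\mapsto t^{r}$ is operator concave there and $\omega_\star^{\,r}\le\sum_x c_x^{\,r}M(x)$ when it is operator convex. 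A short computation then shows that the choice $c_x=(p_x/q_x)^{\alpha}$, used with the shape of~\eqref{eq:notscalable} featuring $\omega^{1-1/\alpha}$ and valid for $\alpha\in[\tfrac12,1)\cup(1,\infty)$, and the choice $c_x=(p_x/q_x)^{\alpha-1}$, used with the shape featuring $\omega^{\alpha/(\alpha-1)}$ for $\alpha\in(0,\tfrac12)$, make the linear term equal either $\alpha$ or $1-\alpha$ times $\sum_x p_x^{\alpha}q_x^{1-\alpha}$, and bound the power term by the complementary multiple of $\sum_x p_x^{\alpha}q_x^{1-\alpha}$ --- from below when $\alpha>1$ and from above when $\alpha\in(0,1)$. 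Since $\alpha+(1-\alpha)=1$, since the coefficient of the power term is positive in every regime, and since the linear term is an exact equality (so the sign of its coefficient does not matter), the objective at $\omega_\star$ is then $\ge\sum_x p_x^{\alpha}q_x^{1-\alpha}$ for $\alpha>1$ and $\le\sum_x p_x^{\alpha}q_x^{1-\alpha}$ for $\alpha\in(0,1)$; as $Q^{\Proj}_{\alpha}(\rho\|\sigma)$ is the supremum, respectively the infimum, of this objective over $\omega>0$, this establishes~\eqref{eq:planreduction}.

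The split at $\alpha=\tfrac12$ is genuinely needed here and is exactly why Lemma~\ref{lm:var_renyi} records two shapes of~\eqref{eq:notscalable}: the operator Jensen step requires $t\mapsto t^{r}$ to be operator convex on $(0,\infty)$, which among negative exponents holds only for $r\in[-1,0]$, so for $\alpha\in(0,\tfrac12)$ one must pass from the exponent $1-\tfrac1\alpha<-1$ to the exponent $\tfrac{\alpha}{\alpha-1}\in(-1,0)$; for $\alpha>1$ the exponent $1-\tfrac1\alpha\in(0,1)$ is operator concave instead, which is what the corresponding direction of the Jensen inequality needs.

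The remaining loose ends are the familiar ones and are handled as in the proofs of Lemma~\ref{lm:var_renyi} and Theorem~\ref{th:general_rel}: an outcome $x$ with $p_x=0$ or $q_x=0$ drives the natural $c_x$ to $0$ or to $\infty$, so I would instead take $c_x=\varepsilon$ or $c_x=\varepsilon^{-1}$ there, keeping $\omega_\star>0$, and let $\varepsilon\to0$; the corresponding contributions vanish in the limit (using $\rho\ll\sigma$ for $\alpha>1$ to rule out $q_x=0<p_x$), and the supremum or infimum over $\omega>0$ absorbs the limit. I expect the only real difficulty to lie in the bookkeeping: keeping every inequality pointing the right way as the prefactor $\frac{1}{\alpha-1}$ changes sign at $\alpha=1$, as $Q^{\Proj}_{\alpha}(\rho\|\sigma)$ alternates between a supremum and an infimum, and as one must choose precisely the shape of the variational formula whose $\omega$-power is operator convex (for $\alpha<1$) or operator concave (for $\alpha>1$) so that operator Jensen can legitimately be applied.
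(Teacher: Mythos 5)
Your proposal is correct and follows essentially the same route as the paper's proof: you plug the test operator $\omega=\sum_x (p_x/q_x)^{\alpha}M(x)$ (respectively $\sum_x (p_x/q_x)^{\alpha-1}M(x)$ for $\alpha\in(0,\tfrac12)$) into the appropriate shape of~\eqref{eq:notscalable} and apply the operator Jensen inequality with the operator concavity of $t^{1-1/\alpha}$ for $\alpha>1$ and operator convexity of the negative powers for $\alpha<1$, splitting at $\alpha=\tfrac12$ exactly as in the paper. Your $\varepsilon$-regularization of degenerate outcomes is just a slightly more careful version of the paper's restriction of the sum to $\bar{\cX}$, so nothing essential differs.
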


\begin{proof}
For $\alpha > 1$ we follow the steps of the proof of Theorem~\ref{th:general_rel}.
Consider any finite set $\cX$ and POVM $M$ with induced measures $P_{\rho,M}$ and $P_{\sigma,M}$. We can write 
\begin{align}\label{eq:alpha_ent_def_proof}
D_{\alpha}(P_{\rho,M}\|P_{\sigma,M}) = \frac{1}{1-\alpha} \log \sum_{x \in \bar{\cX}} P_{\rho,M}(x)^{\alpha} P_{\sigma,M}(x)^{1-\alpha} \,,
\end{align}
where we can restrict the sum over $\bar{\cX} = \{x \in \cX :  P_{\rho,M}(x) P_{\sigma,M}(x) > 0\}$. We then find that the sum satisfies
\begin{align}
\sum_{x \in \bar{\cX}} P_{\rho,M}(x) \left( \frac{ P_{\rho,M}(x)}{ P_{\sigma,M}(x)} \right)^{\alpha-1}
&\leq \tr\!\left[ \rho \left( \sum_{x \in \bar{\cX}} M( x) \left( \frac{ P_{\rho,M} (x)}{ P_{\sigma,M}(x)} \right)^{\alpha} \right)^{1-\frac{1}{\alpha}} \right]\,,
\end{align}
where the inequality again follows by the operator Jensen inequality and the operator concavity of the function $t \mapsto t^{1-\frac{1}{\alpha}}$ on $[0,\infty)$. Now we set
\begin{align}
\omega = \sum_{x \in \bar{\cX}} M( x) \left( \frac{ P_{\rho,M}(x)}{ P_{\sigma,M}(x)} \right)^{\alpha}\quad\text{so that}\quad\tr[\sigma \omega] = \sum_{x \in \cX} P_{\rho,M}( x) \left( \frac{ P_{\rho,M}(x)}{ P_{\sigma,M}(x)} \right)^{\alpha-1}\,.
\end{align}
Thus, we can bound
\begin{align}\label{eq:nicebound1}
\sum_{x\in \cX} P_{\rho,M}( x) \left( \frac{ P_{\rho,M}(x)}{ P_{\sigma,M}(x)} \right)^{\alpha-1} \leq \alpha \tr\!\big[\rho \omega^{1-\frac1\alpha}\big] + (1-\alpha) \tr[\sigma \omega] \,.
\end{align}
Comparing this with the variational expression in Lemma~\ref{lm:var_renyi} yields the desired inequality.

For $\alpha < 1$, we use the same notation as in \eqref{eq:alpha_ent_def_proof}. We further distinguish the cases $\alpha \in (0,\frac12)$ and $\alpha \in [\frac12,1)$. For $\alpha \in (0,\frac12)$, we define
\begin{align}
\omega = \sum_{x\in \bar{\cX}} M(x) \left(\frac{P_{\rho, M}(x)}{ P_{\sigma,M}(x)}\right)^{\alpha-1}\,.
\end{align}
We can then evaluate
\begin{align}
\tr\left[\sigma \omega^{\frac{\alpha}{\alpha-1}}\right] &= \tr\!\left[\sigma \left(\sum_{x \in \bar{\cX}} M( x) \left(\frac{ P_{\rho, M}(x)}{ P_{\sigma,M}(x)}\right)^{\alpha-1} \right)^{\frac{\alpha}{\alpha - 1}}\right] \\
&\leq \tr\!\left[ \sigma \sum_{x \in \bar{\cX}} M( x) \left(\frac{ P_{\rho, M}(x)}{ P_{\sigma,M}(x)} \right)^{\alpha}  \right] = \sum_{x \in \bar{\cX}} P_{\rho, M}(x)^{\alpha} {P}_{\sigma, M}(x)^{1-\alpha}  \,.
\end{align}
where we used the operator convexity of $t \mapsto t^{\frac{\alpha}{\alpha-1}}$ on $(0, \infty)$ and the operator Jensen inequality. Moreover,
\begin{align}
\tr[\rho \omega] = \sum_{x \in \bar{\cX}} P_{\rho, M}(x)^{\alpha} {P}_{\sigma, M}(x)^{1-\alpha} \,.
\end{align}
As a result
\begin{align}
\sum_{x \in \bar{\cX}} P_{\rho,M} (x)^{\alpha} P_{\sigma,M}(x)^{1-\alpha} &\geq \alpha \tr[\rho \omega] + (1-\alpha) \tr[\sigma \omega^{\frac{\alpha}{\alpha-1}}]\,.
\end{align}
Comparing this with the variational expression in Lemma~\ref{lm:var_renyi} yields the desired inequality.

For $\alpha \in [\frac12,1)$ we choose $\omega = \sum_{x \in \bar{\cX}} M(x) \left(\frac{P_{\rho, M}(x)}{P_{\sigma, M}(x)}\right)^{\alpha}$, so that
\begin{align}
\tr\left[\rho \omega^{1 - \frac{1}{\alpha}}\right] &\leq \sum_{x \in \bar{\cX}} P_{\rho, M}(x) \left(\frac{P_{\rho, M}(x)}{ P_{\sigma, M}(x)}\right)^{\alpha - 1} \\
\tr\left[\sigma \omega \right] &=  \sum_{x \in \bar{\cX}} P_{\rho,M} (x)^{\alpha} P_{\sigma,M}(x)^{1-\alpha}  \,,
\end{align}
and once again conclude using the variational expression in Lemma~\ref{lm:var_renyi}.
\end{proof}


\section{Achievability of Relative Entropy}\label{sec:achievable}

\subsection{Umegaki's Relative Entropy}

Here we compare the measured relative entropy to other notions of quantum relative entropy that have been investigated in the literature and have found operational significance in quantum information theory. Umegaki's quantum relative entropy~\cite{umegaki62} has found operational significance as the threshold rate for asymmetric binary quantum hypothesis testing~\cite{hiai91}. For $\rho \in \cS$ and $\sigma \in \cP$, it is defined as
\begin{align}
D(\rho\|\sigma):= \tr\!\big[\rho(\log\rho-\log\sigma)\big]\quad\text{if $\sigma\gg\rho$ and as $+\infty$ if $\sigma \not\gg \rho$.}
\end{align}
We recall the following variational expression by Petz~\cite{petz88} (see also~\cite{kosaki86} for another variational expression):
\begin{align}\label{eq:petz_variational1}
D(\rho\|\sigma) &= \sup_{\omega > 0} \tr[\rho \log \omega] - \log \tr[\exp(\log \sigma +  \log \omega)]\\
&= \sup_{\omega > 0} \tr[\rho \log \omega] + 1 - \tr[\exp(\log \sigma +  \log \omega)]\label{eq:petz_var}\,.
\end{align}
By the data-processing inequality for the relative entropy~\cite{lindblad75,uhlmann77} and Theorem~\ref{th:general_rel} we always have
\begin{align}
D^{\Proj}(\rho\|\sigma)=D^{\M}(\rho\|\sigma)\leq D(\rho\|\sigma)\,,
\end{align}
and moreover Petz~\cite{petz86b} showed the inequality $D^{\Proj}(\rho\|\sigma)\leq D(\rho\|\sigma)$ is strict if $\rho$ and $\sigma$ do not commute (for $\rho>0$ and $\sigma>0)$. Theorem~\ref{th:general_rel} strengthens this to show that the strict inequality persists even when we take a supremum over POVMs. In the following we give an alternative proof of Petz' result and then extend the argument to R\'enyi relative entropy in Section~\ref{sec:sandwhich}.

\begin{prop}\label{prop:petz}
Let $\rho \in \cS$ with $\rho > 0$ and $\sigma\in\cP$ with $\sigma > 0$. Then, we have
\begin{align}\label{eq:petz}
D^{\M}(\rho\|\sigma) \leq D(\rho\|\sigma)\quad\text{with equality if and only if} \quad [\rho,\sigma]=0 \,.
\end{align}
\end{prop}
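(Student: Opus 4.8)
The plan is to pit the variational formula of Lemma~\ref{lm:var_rel} against Petz' variational formula~\eqref{eq:petz_var} for Umegaki's relative entropy, with the Golden--Thompson inequality bridging the two. For every $\omega>0$ one has $\tr[\exp(\log\sigma+\log\omega)]\le\tr[\exp(\log\sigma)\exp(\log\omega)]=\tr[\sigma\omega]$, so the objective in~\eqref{eq:petz_var} dominates pointwise the objective in the second identity of~\eqref{eq:lemma3}; taking suprema over $\omega>0$ and using Theorem~\ref{th:general_rel} gives $D^{\M}(\rho\|\sigma)=D^{\Proj}(\rho\|\sigma)\le D(\rho\|\sigma)$. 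For the ``if'' direction, if $[\rho,\sigma]=0$ I would diagonalise $\rho=\sum_i p_i\,|i\rangle\langle i|$ and $\sigma=\sum_i q_i\,|i\rangle\langle i|$ in a common basis and insert the rank-one projectors $P_i=|i\rangle\langle i|$ into~\eqref{eq:relative_measured}, obtaining $D^{\Proj}(\rho\|\sigma)\ge\sum_i p_i\log(p_i/q_i)=D(\rho\|\sigma)$, which combined with the inequality just proved forces equality.

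For the ``only if'' direction, assume $D^{\M}(\rho\|\sigma)=D(\rho\|\sigma)$. Since $\rho>0$ and $\sigma>0$, Lemma~\ref{lm:var_rel} yields an optimiser $\bar\omega>0$. Plugging it into the chain
\begin{align}
D^{\M}(\rho\|\sigma)=\tr[\rho\log\bar\omega]+1-\tr[\sigma\bar\omega]\le\tr[\rho\log\bar\omega]+1-\tr[\exp(\log\sigma+\log\bar\omega)]\le D(\rho\|\sigma)\,,
\end{align}
where the first inequality is Golden--Thompson and the second holds because $\bar\omega$ is feasible in~\eqref{eq:petz_var}, the hypothesis forces both inequalities to be equalities. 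Hence (a) $\bar\omega$ is also an optimiser in~\eqref{eq:petz_var}, and (b) Golden--Thompson is saturated at $\bar\omega$, i.e.\ $\tr[\exp(\log\sigma+\log\bar\omega)]=\tr[\exp(\log\sigma)\exp(\log\bar\omega)]$.

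From (a): the map $L\mapsto\tr[\rho L]+1-\tr[\exp(\log\sigma+L)]$ is strictly concave on Hermitian $L$ (since $X\mapsto\tr e^{X}$ is strictly convex), so it has a unique maximiser; as $L=\log\rho-\log\sigma$ attains the supremum in~\eqref{eq:petz_var}, we conclude $\bar\omega=\exp(\log\rho-\log\sigma)$. Substituting into (b) and using $\exp(\log\sigma+\log\bar\omega)=\exp(\log\rho)$, the saturation becomes $\tr[\exp(\log\sigma)\exp(\log\rho-\log\sigma)]=\tr[\exp(\log\rho)]$, which is exactly equality in the Golden--Thompson trace inequality for the pair $(\log\sigma,\log\rho-\log\sigma)$. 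By the equality condition for Golden--Thompson this forces $[\log\sigma,\log\rho-\log\sigma]=0$, hence $[\sigma,\rho]=0$.

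The main obstacle is the equality condition for the Golden--Thompson trace inequality, $\tr[e^{A+B}]=\tr[e^{A}e^{B}]\iff[A,B]=0$ for Hermitian $A,B$, which I would cite from the literature (or prove separately). One also has to verify cleanly that $\exp(\log\rho-\log\sigma)$ is the unique maximiser in~\eqref{eq:petz_var}; alternatively, one can bypass this uniqueness by writing out the first-order stationarity condition of $\omega\mapsto\tr[\rho\log\omega]+1-\tr[\sigma\omega]$ at $\bar\omega$ directly and, after using~(b) to deduce $[\sigma,\bar\omega]=0$, working in the common eigenbasis of $\sigma$ and $\bar\omega$: testing the stationarity condition against off-diagonal matrix units and using that the divided difference $(\log a-\log b)/(a-b)$ is strictly positive for $a,b>0$ shows that every off-diagonal entry of $\rho$ in that basis vanishes, whence $[\rho,\sigma]=0$.
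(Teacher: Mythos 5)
Your proposal is correct, and its first half (the inequality via Golden--Thompson applied pointwise to the two variational formulas, and the ``if'' direction via a common eigenbasis) coincides with the paper's argument. Where you genuinely diverge is in extracting $[\rho,\sigma]=0$ from equality. The paper takes a maximizer $H^*$ of the measured formula \eqref{eq:var2}, uses the Golden--Thompson equality condition of~\cite{so92} to get $[\sigma,H^*]=0$, and then evaluates the first-order (Fr\'echet) stationarity condition of $H\mapsto\tr[\rho H]-\tr[\sigma e^{H}]$ at $H^*$ to conclude $\rho=\sigma e^{H^*}$, hence $[\rho,\sigma]=0$. You instead note that the collapse of your chain makes the measured optimizer $\bar\omega$ also an optimizer of Petz' formula \eqref{eq:petz_var}, invoke strict concavity of $L\mapsto\tr[\rho L]+1-\tr[e^{\log\sigma+L}]$ to pin it down as the explicit maximizer $\log\bar\omega=\log\rho-\log\sigma$, and then apply the Golden--Thompson equality condition to the pair $(\log\sigma,\log\rho-\log\sigma)$, getting $[\log\rho,\log\sigma]=0$ directly. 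Both routes lean on the same equality condition from~\cite{so92}; yours trades the paper's derivative computation for a uniqueness-of-maximizer argument, which is arguably cleaner but does require the \emph{strict} convexity of $X\mapsto\tr[e^{X}]$ on Hermitian matrices --- a true but not entirely free fact that you should justify (e.g.\ the second derivative in direction $\Delta\neq0$ is $\int_0^1\tr[e^{sX}\Delta e^{(1-s)X}\Delta]\,\d s=\int_0^1\sum_{ij}e^{s x_i+(1-s)x_j}|\Delta_{ij}|^2\,\d s>0$) or cite. The fallback you sketch at the end (stationarity at $\bar\omega$ plus $[\sigma,\bar\omega]=0$ from Golden--Thompson saturation, then a divided-difference argument in the common eigenbasis) is essentially the paper's own proof, phrased in $\omega$ rather than $H=\log\omega$.
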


Our proof relies on the Golden--Thompson inequality~\cite{golden65,thompson65}. It states that for two Hermitian matrices $X$ and $Y$, it holds that
\begin{align}\label{eq:gt}
\tr[\exp(X+Y)]\leq\tr[\exp(X)\exp(Y)]
\end{align}
with equality if an only if $[X,Y] = 0$ as shown in~\cite{so92}.
\begin{proof}
First, it is evident that equality holds if $[\rho,\sigma] = 0$ since then there exists a projective measurement that commutes with $\rho$ and $\sigma$ and thus does not effect the states.
For the following, it is worth writing the variational expressions for the two quantities side by side. Namely, writing $H = \log \omega$, we have
\begin{align}
D(\rho\|\sigma) &= 1 + \sup_{H} \tr[\rho H] - \tr[\exp(\log \sigma +  H)] \\
D^{\M}(\rho\|\sigma) &= 1 + \max_{H} \tr[\rho H] - \tr[\sigma \exp(H)]\,, \label{eq:var2}
\end{align}
where we optimize over all Hermitian operators $H$. Note that, according to Lemma~\ref{lm:var_rel}, we can write a $\max$ for \eqref{eq:var2} because we are assuming $\rho > 0$ and $\sigma > 0$.
The inequality in~\eqref{eq:petz} can now be seen as a direct consequence of the Golden--Thompson inequality.

It remains to show that $D^{\M}(\rho\|\sigma) = D(\rho\|\sigma)$ implies $[\rho,\sigma]= 0$. Let $H^*$ be any maximizer of the variational problem in~\eqref{eq:var2}. Observe now that the equality $D^{\M}(\rho\|\sigma) = D(\rho\|\sigma)$ necessitates
\begin{align}
\tr[\exp(\log \sigma + H^*)] = \tr[\sigma \exp(H^*)]\,,
\end{align}
which holds only if and only if $[\sigma,H^*] = 0$ by the equality condition in~\eqref{eq:gt}. Now define the function
\begin{align}
f(H) = \tr[\rho H] - \tr[\sigma \exp(H)]\,,
\end{align}
and since $H^*$ maximizes $f(H)$, we must have for all Hermitian $\Delta$,
\begin{align}
0 = D f(H^*) [\Delta] = \tr[\rho \Delta] - \tr[\sigma \exp(H^*) \Delta] \,.
\end{align}
To evaluate the second summand of this Fr\'echet derivative we used that $\sigma$ and $H^*$ commute. Since this holds for all $\Delta$ we must in fact have
$\rho = \sigma \exp(H^*)$, which means that $[\rho,\sigma]=0$, as desired.
\end{proof}

In some sense this result tells us that some quantum correlations, as measured by the relative entropy, do not survive the measurement process. This fact appears in quantum information theory in various different guises, for example in the form of locking classical correlations in quantum states~\cite{divincenzo04}. (We also point to~\cite{piani09} for the use of measured relative entropy measures in quantum information theory.) Moreover, since Umegaki's relative entropy is the smallest quantum generalization of the Kullback-Leibler divergence that is both additive and satisfies data-processing (see, e.g.,~\cite[Sec.~4.2.2]{mybook}), the same conclusion can be drawn for any sensible quantum relative entropy. (An example being the quantum relative entropy introduced by Belavkin and Staszewski~\cite{belavkin82}.)


\subsection{Sandwiched R\'enyi Relative Entropy}\label{sec:sandwhich}

Next we consider a family of quantum R\'enyi relative entropies~\cite{lennert13,wilde13} that are commonly called sandwiched R\'enyi relative entropies and have found operational significance since they determine the strong converse exponent in asymmetric binary quantum hypothesis testing~\cite{mosonyiogawa13}. They are of particular interest here because they are, for $\alpha \geq \frac12$, the smallest quantum generalization of the R\'enyi divergence that is both additive and satisfies data-processing~\cite[Sec.~4.2.2]{mybook}. (Examples for other sensible quantum generalizations are the quantum R\'enyi relative entropy first studied by Petz~\cite{petz86} and the quantum divergences introduced by Matsumoto~\cite{matsumoto14}.)

For $\rho \in \cS$ and $\sigma \in \cP$, the sandwiched R\'enyi relative entropy of order $\alpha\in(0,1)\cup(1,\infty)$ is defined as
\begin{align}
{D}_{\alpha}(\rho\|\sigma) := \frac{1}{\alpha-1} \log Q_{\alpha}(\rho\|\sigma)
\quad \textrm{with} \quad 
Q_{\alpha}(\rho\|\sigma) :=\tr\left[\left(\sigma^{\frac{1-\alpha}{2\alpha}}\rho\sigma^{\frac{1-\alpha}{2\alpha}}\right)^\alpha\right] \,,
\end{align}
where the same considerations about finiteness as for the measured R\'enyi relative entropy apply. We also consider the limits $\alpha \to \infty$ and $\alpha \to 1$ of the above expression for which we have~\cite{lennert13},
\begin{align}
D_{\infty}(\rho\|\sigma)=\inf\left\{\lambda\in\mathbb{R}\middle|\rho\leq\exp(\lambda)\sigma\right\}\quad\mathrm{and}\quad D_{1}(\rho\|\sigma) = D(\rho\|\sigma)\,,
\end{align}
respectively. We recall the following variational expression by Frank and Lieb~\cite{frank13}:
\begin{align}
Q_{\alpha}(\rho\|\sigma) &= \begin{cases}
\displaystyle \inf_{\omega > 0} \alpha \tr[\rho\omega] + (1-\alpha) \tr\Big[ \Big( \omega^{\frac12} \sigma^{\frac{\alpha-1}{\alpha}} \omega^{\frac12} \Big)^{\frac{\alpha}{\alpha-1}} \Big] & \textrm{for $\alpha \in (0,1)$}\\
\displaystyle \sup_{\omega > 0} \alpha \tr[\rho\omega] + (1-\alpha) \tr\Big[ \Big( \omega^{\frac12} \sigma^{\frac{\alpha-1}{\alpha}} \omega^{\frac12} \Big)^{\frac{\alpha}{\alpha-1}} \Big] & \textrm{for $\alpha \in (1,\infty)$}\,. \label{eq:frankformula}
\end{cases}
\end{align}
Alternatively, we can also write
\begin{align}\label{eq:newformula}
Q_{\alpha}(\rho\|\sigma) &= \begin{cases}
\displaystyle \inf_{\omega > 0} \tr[\rho\omega]^\alpha\tr\Big[ \Big( \omega^{\frac12} \sigma^{\frac{\alpha-1}{\alpha}} \omega^{\frac12} \Big)^{\frac{\alpha}{\alpha-1}} \Big]^{1-\alpha} & \textrm{for $\alpha \in (0,1)$}\\
\displaystyle \sup_{\omega > 0} \tr[\rho\omega]^\alpha\tr\Big[ \Big( \omega^{\frac12} \sigma^{\frac{\alpha-1}{\alpha}} \omega^{\frac12} \Big)^{\frac{\alpha}{\alpha-1}} \Big]^{1-\alpha} & \textrm{for $\alpha \in (1,\infty)$}\,,
\end{cases}
\end{align}
where we have used the same arguments as in the proof of the second part of Lemma~\ref{lm:var_renyi}. By the data-processing inequality for the sandwiched R\'enyi relative entropy~\cite{lennert13,frank13,beigi13} we always have
\begin{align}
D^{\M}_\alpha(\rho\|\sigma)\leq {D}_\alpha(\rho\|\sigma)\quad\text{for $\alpha\geq\frac12$}\,.
\end{align}
In the following we give an alternative proof of this fact and show that
\begin{align}
[\rho,\sigma]\neq0\quad \implies \quad D^{\M}_\alpha(\rho\|\sigma) < {D}_\alpha(\rho\|\sigma)
\quad \textrm{for} \quad \alpha\in\left(\frac12,\infty\right)\,.
\end{align}
In contrast, at the boundaries $\alpha \in \big\{ \frac12, \infty \big\}$ it is known that $D_{\alpha}(\rho\|\sigma) = D_{\alpha}^{\M}(\rho\|\sigma)$~\cite{koenig08,fuchs96,mosonyiogawa13}. (We refer to~\cite[App.~A]{mosonyiogawa13} for a detailed discussion.)

\begin{theorem}\label{prop:renyi_equality}
Let $\rho \in \cS$ with $\rho > 0$ and $\sigma \in \cP$ with $\sigma > 0$. For $\alpha \in \big(\frac12, \infty\big)$, we have
\begin{align}
D_{\alpha}^{\M}(\rho\|\sigma) \leq {D}_{\alpha}(\rho\|\sigma)\quad\text{with equality if and only if}\quad[\rho,\sigma]=0\,.
\end{align}
\end{theorem}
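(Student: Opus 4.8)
The plan is to mirror the structure of the proof of Proposition~\ref{prop:petz}, replacing the Golden--Thompson inequality by its Rényi generalization, the Araki--Lieb--Thirring (ALT) inequality. First I would recall the ALT inequality in the form convenient here: for positive semidefinite $A, B$ and $r \ge 1$ one has $\tr[(B^{1/2} A B^{1/2})^{r}] \le \tr[(B^{r/2} A^{r} B^{r/2})]$, and for $0 < r \le 1$ the inequality reverses; moreover equality holds (for positive definite operators) if and only if $[A,B]=0$. Specialising with $A = \omega^{1-1/\alpha}$, $B = \sigma$ (or $A=\rho\omega$ type factors, depending on the branch) and $r = \alpha/(\alpha-1)$ or its reciprocal, one sees that the Frank--Lieb variational expression in~\eqref{eq:newformula} for $Q_\alpha(\rho\|\sigma)$ dominates (for $\alpha>1$) or is dominated by (for $\alpha\in(\tfrac12,1)$) the corresponding measured expression from~\eqref{eq:gen_alberti} in Lemma~\ref{lm:var_renyi}, term by term in $\omega$. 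Writing the two variational problems side by side, as is done for the relative entropy, the inequality $D^{\M}_\alpha(\rho\|\sigma) \le D_\alpha(\rho\|\sigma)$ then drops out, after taking care that for $\alpha>1$ a $\sup$ on both sides preserves the inequality, while for $\alpha\in(\tfrac12,1)$ the $\inf$ on both sides together with the prefactor $\tfrac1{\alpha-1}<0$ flips things consistently so that the direction of the entropy inequality is the asserted one.

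Next I would handle the easy direction of the equality case: if $[\rho,\sigma]=0$ there is a common eigenbasis, hence a rank-$1$ projective measurement that leaves both states invariant, so the measured quantity already equals the classical Rényi divergence of the eigenvalue distributions, which equals $D_\alpha(\rho\|\sigma)$ in the commuting case. The substantive direction is to show equality forces $[\rho,\sigma]=0$. Here I would argue as follows. Both the sandwiched and the measured Rényi quantities are, when $\rho>0$ and $\sigma>0$, attained at some optimiser $\omega^*>0$ (the measured one by Lemma~\ref{lm:var_renyi}, the sandwiched one by the Frank--Lieb result). Equality of the two entropies forces equality in the ALT step evaluated at the relevant optimiser, and the equality condition for ALT then yields a commutation relation, of the form $[\sigma, \omega^*]=0$ (for the branch where $\sigma$ is the "outer" operator). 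Given that, I would then invoke stationarity of the measured variational functional at $\omega^*$: computing the Fréchet derivative of, say, $\omega \mapsto \alpha\tr[\rho\omega^{1-1/\alpha}] + (1-\alpha)\tr[\sigma\omega]$ (the $\alpha\in[\tfrac12,1)$ and $\alpha>1$ branch), using $[\sigma,\omega^*]=0$ to simplify the derivative of the second term and noting that the first term's derivative also simplifies because $\omega^*$ and hence powers of $\omega^*$ commute with $\sigma$, and setting it to zero for all Hermitian directions $\Delta$. This should pin down $\rho$ as a function of $\omega^*$ and $\sigma$ — something like $\rho = (\omega^*)^{-1/\alpha}$ up to the commuting factor $\sigma$-dependence — from which $[\rho,\sigma]=0$ follows because everything in sight is a function of operators commuting with $\sigma$.

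I expect the main obstacle to be the equality-condition argument, specifically bookkeeping the different branches ($\alpha\in(\tfrac12,1)$ versus $\alpha\in(1,\infty)$, and within $(\tfrac12,1)$ the fact that Lemma~\ref{lm:var_renyi} is stated with two equivalent but differently-parametrised forms) and making sure the ALT equality condition is being applied to operators that are genuinely positive definite so that "$[A,B]=0$" is the exact equality characterisation rather than merely necessary. A secondary subtlety is that one must be careful that the optimiser $\omega^*$ used for the ALT step is the \emph{same} $\omega$ (or related by the explicit scaling $\omega\mapsto\lambda\omega$ from the proof of Lemma~\ref{lm:var_renyi}) for which stationarity is exploited; since the entropy expressions differ from the $Q_\alpha$ expressions only by a monotone transformation and the scaling degree of freedom is harmless, this is a matter of care rather than a genuine difficulty. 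Finally, one should remark that the case $\alpha\in(\tfrac12,1)$ is exactly where the \emph{reverse} ALT inequality is used, which is why the conclusion is $D^{\M}_\alpha \le D_\alpha$ there as well (rather than the reverse), consistent with the data-processing statement quoted just before the theorem; I would state this explicitly to avoid sign confusion.
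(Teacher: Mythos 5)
Your proposal follows the paper's proof in all essentials: the term-by-term Araki--Lieb--Thirring comparison of the measured variational expression (Lemma~\ref{lm:var_renyi}) with the Frank--Lieb expression~\eqref{eq:frankformula}, the easy direction via a common eigenbasis, and, for the converse, the ALT equality condition at the measured optimiser $\omega^*$ yielding $[\sigma,\omega^*]=0$ followed by first-order stationarity to force $\rho$ into the commutant of $\sigma$. One step, however, would fail as you have written it.

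The problem is the parametrisation you choose for the stationarity argument. You propose to differentiate $\omega\mapsto\alpha\tr\bigl[\rho\,\omega^{1-1/\alpha}\bigr]+(1-\alpha)\tr[\sigma\omega]$ and assert that the first term's derivative simplifies ``because $\omega^*$ and hence powers of $\omega^*$ commute with $\sigma$''. That is a non sequitur: the Fr\'echet derivative of $\tr\bigl[\rho\,\omega^{1-1/\alpha}\bigr]$ at $\omega^*$ collapses to the naive expression $(1-\tfrac1\alpha)\tr\bigl[\rho\,(\omega^*)^{-1/\alpha}\Delta\bigr]$ only if $\rho$ commutes with $\omega^*$, which is essentially the conclusion you are after and is not what the ALT equality condition supplies (that gives commutation with $\sigma$). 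The paper sidesteps this by working in the equivalent parametrisation $f_\alpha(\omega)=\alpha\tr[\rho\omega]+(1-\alpha)\tr\bigl[\sigma\,\omega^{\alpha/(\alpha-1)}\bigr]$ of~\eqref{eq:Qalpha_meas}: there the $\rho$-term is linear, so its derivative is $\alpha\tr[\rho\Delta]$ unconditionally, while the nonlinear term involves $\sigma$, and it is exactly the ALT-derived relation $[\sigma,\omega^*]=0$ that lets its derivative be written as $-\alpha\tr\bigl[\sigma(\omega^*)^{1/(\alpha-1)}\Delta\bigr]$; stationarity then gives $\rho=\sigma(\omega^*)^{1/(\alpha-1)}$, hence $[\rho,\sigma]=0$. (Your parametrisation can be salvaged by a divided-difference computation in a joint eigenbasis of $\sigma$ and the optimiser, but swapping to the paper's form is the painless fix.) A further, minor point: in the ALT step the correct specialisation takes $Y=\omega^{\pm 1/2}$ and $X$ a power of $\sigma$, with $r=\alpha/|\alpha-1|\geq 1$ on the whole range $\alpha\in(\tfrac12,1)\cup(1,\infty)$; with $\sigma$ as the outer factor $B$, as in your sketch, the resulting traces match neither the Frank--Lieb nor the measured expression.
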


The argument is similar to the proof of Proposition~\ref{prop:petz} but with the Golden--Thompson inequality replaced by the Araki--Lieb--Thirring inequality~\cite{liebthirring05,araki90}. It states that for $X,Y\geq0$ we have
\begin{align}
\tr\left[YXY\right]^r\leq\tr\left[Y^rX^rY^r\right]& \quad\text{for $r\geq1$}\label{eq:alt_1}\\
\tr\left[YXY\right]^r\geq\tr\left[Y^rX^rY^r\right]& \quad\text{for $r\in[0,1]$}\label{eq:alt_2}\,,
\end{align}
with equality if and only if $[X,Y]=0$ except for $r=1$ as shown in~\cite{hiai94}.

\begin{proof}
We give the proof for $\alpha \in \big(\frac12, 1\big)$ and note that the argument for $\alpha \in (1,\infty)$ is analogous. We have the following variational expressions from Lemma~\ref{lm:var_renyi} and~\eqref{eq:frankformula}:
\begin{align}
Q_\alpha(\rho\|\sigma) &= \inf_{\omega > 0} \alpha \tr[\rho\omega] + (1-\alpha) \tr\Big[ \Big( \omega^{-\frac12} \sigma^{\frac{1-\alpha}{\alpha}} \omega^{-\frac12} \Big)^{\frac{\alpha}{1-\alpha}} \Big]\label{eq:Qalpha}\\
Q^{\Proj}_\alpha(\rho\|\sigma) &= \min_{\omega > 0} \alpha \tr[\rho\omega] + (1-\alpha) \tr\left[ \sigma \omega^{\frac{\alpha}{\alpha-1}} \right]\label{eq:Qalpha_meas}\,,
\end{align}
where the existence of the minima relies on the fact that both operators have full support.
(Note also that these two expressions are in fact equivalent for $\alpha = \frac12$.)
Since $\frac{\alpha}{1-\alpha}\geq1$, the inequality then follows immediately by the Araki--Lieb--Thirring inequality~\eqref{eq:alt_1}:
\begin{align}
Q_\alpha^{\Proj}(\rho\|\sigma)\geq Q_\alpha(\rho\|\sigma)\quad
\implies
\quad D_\alpha^{\M}(\rho\|\sigma)\leq  {D}_{\alpha}(\rho\|\sigma)\,.
\end{align}
Furthermore, if $[\rho,\sigma]=0$ we have equality. To show that $Q^{\Proj}_\alpha(\rho\|\sigma) = Q_\alpha(\rho\|\sigma)$ implies $[\rho,\sigma]= 0$, we define the function
\begin{align}
f_\alpha(\omega) = \alpha \tr[\rho\omega] + (1-\alpha) \tr\left[ \sigma \omega^{\frac{\alpha}{\alpha-1}} \right]\,. \label{eq:theminhere}
\end{align}
For $\omega^*_\alpha$ any minimizer of the variational problem in~\eqref{eq:Qalpha_meas}, we have
\begin{align}
0 = D f_\alpha(\omega^*_\alpha) [\Delta] = \alpha\tr[\rho \Delta]-\alpha\tr\left[\sigma\left(\omega^*_\alpha\right)^{\frac{1}{\alpha-1}}\Delta\right]\,,
\end{align}
for all Hermitian $\Delta$. To evaluate the second summand of this Fr\'echet derivative we used that $\sigma$ and $\omega^*_\alpha$ commute, which holds by the equality condition for Araki--Lieb--Thirring. We thus conclude that
$\rho=\sigma\left(\omega^*_\alpha\right)^{\frac{1}{\alpha-1}}$ which implies that $[\rho,\sigma]=0$.
\end{proof}


\section{Violation of Data-Processing for $\alpha < \frac12$}

As a further application of the variational characterization of measured R\'enyi relative entropy, we can show that the data-processing for the sandwiched R\'enyi relative entropy $D_{\alpha}$ fails for $\alpha < \frac12$. (Numerical evidence pointed to the fact that data-processing does not hold in this regime~\cite{martinthesis}.)

\begin{theorem}
Let $\rho \in \cS$ with $\rho > 0$ and $\sigma\in\cP$ with $\sigma > 0$, and $[\rho,\sigma] \neq 0$. For $\alpha \in\big(0, \frac12\big)$, we have $D_{\alpha}^{\Proj}(\rho\|\sigma) > {D}_{\alpha}(\rho\|\sigma)$.
\end{theorem}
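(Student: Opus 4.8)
The plan is to mirror the proof of Theorem~\ref{prop:renyi_equality}, but now exploiting that the relevant exponent has flipped sign. For $\alpha \in (0,\frac12)$, the natural variational expressions to place side by side are the one for $Q_\alpha(\rho\|\sigma)$ from~\eqref{eq:frankformula} and the one for $Q^{\Proj}_\alpha(\rho\|\sigma)$ from the first line of~\eqref{eq:notscalable}, both written so that the objective is an infimum over $\omega > 0$ of $\alpha \tr[\rho\omega] + (1-\alpha)(\text{something involving }\sigma\text{ and }\omega)$. Concretely,
\begin{align}
Q_\alpha(\rho\|\sigma) &= \inf_{\omega > 0} \alpha \tr[\rho\omega] + (1-\alpha) \tr\Big[ \Big( \omega^{\frac12} \sigma^{\frac{\alpha-1}{\alpha}} \omega^{\frac12} \Big)^{\frac{\alpha}{\alpha-1}} \Big]\,, \\
Q^{\Proj}_\alpha(\rho\|\sigma) &= \inf_{\omega > 0} \alpha \tr[\rho\omega] + (1-\alpha) \tr\Big[ \sigma\, \omega^{\frac{\alpha}{\alpha-1}} \Big]\,.
\end{align}
The key point is that for $\alpha \in (0,\frac12)$ the exponent $r := \frac{\alpha}{\alpha-1}$ satisfies $r \in (0,1)$, so the relevant instance of the Araki--Lieb--Thirring inequality is~\eqref{eq:alt_2}, which reverses: with $X = \sigma^{\frac{\alpha-1}{\alpha}} = \sigma^{1/r}$ (positive since $\sigma > 0$) and $Y = \omega^{1/2}$, inequality~\eqref{eq:alt_2} with $r \in (0,1)$ gives $\tr[(\omega^{1/2}\sigma^{1/r}\omega^{1/2})^{r}] \geq \tr[\omega^{r/2}\sigma\,\omega^{r/2}] = \tr[\sigma\,\omega^{r}]$. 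Since $1-\alpha > 0$, this means the $\sigma$-term in $Q_\alpha$ dominates the corresponding term in $Q^{\Proj}_\alpha$ pointwise in $\omega$, hence $Q_\alpha(\rho\|\sigma) \geq Q^{\Proj}_\alpha(\rho\|\sigma)$. Because $\alpha - 1 < 0$, the map $x \mapsto \frac{1}{\alpha-1}\log x$ is order-reversing, so this translates into $D_\alpha^{\Proj}(\rho\|\sigma) \leq D_\alpha(\rho\|\sigma)$ — wait, the inequality we want is the reverse, $D_\alpha^{\Proj} > D_\alpha$, so I need to double-check the direction of ALT and the sign bookkeeping carefully; the upshot should be that $Q_\alpha^{\Proj} > Q_\alpha$ combined with the order-reversing logarithm prefactor yields $D_\alpha^{\Proj}(\rho\|\sigma) > D_\alpha(\rho\|\sigma)$ as claimed, and the strictness is exactly the content of the equality condition in~\eqref{eq:alt_2}.

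For the strict inequality, I would argue as in Theorem~\ref{prop:renyi_equality}: let $\omega^*$ be a minimizer of the $Q_\alpha(\rho\|\sigma)$ problem (which exists since $\rho > 0$, $\sigma > 0$). Evaluating the pointwise ALT bound at $\omega^*$ gives $Q_\alpha(\rho\|\sigma) \geq \alpha\tr[\rho\omega^*] + (1-\alpha)\tr[\sigma\,(\omega^*)^{r}] \geq Q^{\Proj}_\alpha(\rho\|\sigma)$. If $D_\alpha^{\Proj}(\rho\|\sigma) = D_\alpha(\rho\|\sigma)$, then $Q_\alpha(\rho\|\sigma) = Q^{\Proj}_\alpha(\rho\|\sigma)$, so both inequalities are equalities; in particular the first forces $\tr[(\omega^{*\,1/2}\sigma^{1/r}\omega^{*\,1/2})^{r}] = \tr[\sigma\,(\omega^*)^{r}]$, which by the equality condition in~\eqref{eq:alt_2} (valid since $r \neq 1$) forces $[\sigma, \omega^*] = 0$. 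Then, exactly as in the proof of Theorem~\ref{prop:renyi_equality}, I set $f_\alpha(\omega) = \alpha\tr[\rho\omega] + (1-\alpha)\tr[(\omega^{1/2}\sigma^{1/r}\omega^{1/2})^{r}]$, use that $\omega^*$ is a stationary point, and evaluate the Fr\'echet derivative using $[\sigma,\omega^*] = 0$ to obtain $\rho = \sigma(\omega^*)^{1/(\alpha-1)}$ (a positive-definite function of $\sigma$), hence $[\rho,\sigma] = 0$, contradicting the hypothesis. Therefore the inequality is strict.

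The main obstacle I anticipate is not conceptual but bookkeeping: getting all the exponent signs and inequality directions right, since for $\alpha < \frac12$ we simultaneously have (i) $\frac{\alpha}{\alpha-1} \in (0,1)$ so ALT reverses, (ii) $1-\alpha > 0$ unlike the $\alpha > 1$ case, and (iii) $\frac{1}{\alpha-1} < 0$ so the logarithm prefactor reverses order — three sign flips that must compose correctly to yield $D_\alpha^{\Proj} > D_\alpha$ rather than $<$. A secondary subtlety is justifying that the Frank--Lieb minimizer $\omega^*$ exists and is positive definite in this range and that the Fr\'echet derivative of $\omega \mapsto \tr[(\omega^{1/2}\sigma^{1/r}\omega^{1/2})^{r}]$ simplifies as claimed once $[\sigma,\omega^*]=0$; both follow from the arguments already invoked in Lemma~\ref{lm:var_renyi} and Theorem~\ref{prop:renyi_equality}, applied with the appropriate $\alpha$. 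One should also note that $\rho \not\perp \sigma$ automatically here since both are positive definite, so $Q^{\Proj}_\alpha(\rho\|\sigma)$ is finite and the minimum in~\eqref{eq:q2} is attained and nonzero, and likewise $Q_\alpha(\rho\|\sigma) > 0$, so taking logarithms is legitimate throughout.
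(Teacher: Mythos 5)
Your overall strategy is exactly the paper's: compare the Frank--Lieb expression for $Q_\alpha$ with the variational expression for $Q_\alpha^{\Proj}$ from Lemma~\ref{lm:var_renyi}, invoke the reversed Araki--Lieb--Thirring inequality, and get strictness from its equality condition plus the stationarity argument of Theorem~\ref{prop:renyi_equality}. But as written there are two concrete problems, both sitting precisely at the sign bookkeeping you flag and never resolve. First, for $\alpha\in(0,\tfrac12)$ your exponent $r=\tfrac{\alpha}{\alpha-1}$ lies in $(-1,0)$, not in $(0,1)$ (e.g.\ $\alpha=\tfrac14$ gives $r=-\tfrac13$); it is $\tfrac{\alpha}{1-\alpha}$ that lies in $(0,1)$. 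Hence \eqref{eq:alt_2} does not apply as stated to $\tr\big[\big(\omega^{1/2}\sigma^{1/r}\omega^{1/2}\big)^{r}\big]$. The trace inequality you want, $\tr\big[\big(\omega^{1/2}\sigma^{\frac{\alpha-1}{\alpha}}\omega^{1/2}\big)^{\frac{\alpha}{\alpha-1}}\big]\geq\tr\big[\sigma\,\omega^{\frac{\alpha}{\alpha-1}}\big]$, is in fact true, but to obtain it from \eqref{eq:alt_2} you must first rewrite the Frank--Lieb term as $\tr\big[\big(\omega^{-1/2}\sigma^{\frac{1-\alpha}{\alpha}}\omega^{-1/2}\big)^{\frac{\alpha}{1-\alpha}}\big]$ (this is \eqref{eq:Qalpha}) and apply \eqref{eq:alt_2} with $Y=\omega^{-1/2}$, $X=\sigma^{\frac{1-\alpha}{\alpha}}$ and exponent $\tfrac{\alpha}{1-\alpha}\in(0,1)$ --- which is what the paper does --- or equivalently pass to inverses; the same repair is needed for the equality condition you invoke at $\omega^*$.

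Second, your final composition of inequalities is inconsistent. The correct chain is: the ALT bound holds pointwise in $\omega$, and since $1-\alpha>0$ it gives $Q_\alpha^{\Proj}(\rho\|\sigma)\leq Q_\alpha(\rho\|\sigma)$; since $\tfrac{1}{\alpha-1}<0$, the map $x\mapsto\tfrac{1}{\alpha-1}\log x$ is order reversing, so this yields $D_\alpha^{\Proj}\geq D_\alpha$ (not $\leq$, as you first wrote), and strictness follows because your equality analysis shows $Q_\alpha^{\Proj}=Q_\alpha$ forces $[\rho,\sigma]=0$. Your closing ``upshot'' sentence instead asserts $Q_\alpha^{\Proj}>Q_\alpha$ and that, combined with the order-reversing prefactor, this gives $D_\alpha^{\Proj}>D_\alpha$; that statement is doubly wrong (the $Q$-inequality goes the other way, and an order-reversing map would turn $Q^{\Proj}>Q$ into $D^{\Proj}<D$), so the key direction is never actually established in your write-up. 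Once these two points are fixed, your strictness argument does go through; it differs from the paper only in that you differentiate the sandwiched objective at a minimizer of the Frank--Lieb problem (which requires checking that its Fr\'echet derivative reduces to $\tfrac{\alpha}{\alpha-1}\tr\big[\sigma(\omega^*)^{\frac{1}{\alpha-1}}\Delta\big]$ once $[\sigma,\omega^*]=0$), whereas the paper differentiates the simpler measured objective \eqref{eq:theminhere} at a minimizer of \eqref{eq:Qalpha_meas}, where this simplification is immediate.
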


In particular, there exists a rank-$1$ measurement that achieves $D_{\alpha}^{\Proj}(\rho\|\sigma)$ and thus violates the data-processing inequality.

\begin{proof}
First note that $[\rho,\sigma] \neq 0$ implies that the two states are not orthogonal and thus both quantities are finite.
For $\alpha \in (0, \frac12)$ the formulas~\eqref{eq:Qalpha} and~\eqref{eq:Qalpha_meas} still hold. However, in contrast to the proof of Theorem~\ref{prop:renyi_equality} we have $\frac{\alpha}{1-\alpha}\in[0,1]$. Hence, we find by the Araki--Lieb--Thirring inequality~\eqref{eq:alt_2} that
\begin{align}
Q_\alpha^{\Proj}(\rho\|\sigma)\leq Q_\alpha(\rho\|\sigma)\quad \implies \quad D_\alpha^{\Proj}(\rho\|\sigma)\geq{D}_{\alpha}(\rho\|\sigma)\,.
\end{align}
As in the proof of Theorem~\ref{prop:renyi_equality} we have equality if and only if $[\rho,\sigma]=0$. This implies the claim.
\end{proof}


\section{Exploiting Variational Formulas}\label{sec:recovery}

\subsection{Some Optimization Problems in Quantum Information}\label{sec:exploiting}

The variational characterizations of the relative entropy~\eqref{eq:petz_variational1}--\eqref{eq:petz_var}, the sandwiched R\'enyi relative entropy~\eqref{eq:frankformula}--\eqref{eq:newformula}, and their measured counterparts (Lemma~\ref{lm:var_rel} and Lemma~\ref{lm:var_renyi}), can be used to derive properties of various entropic quantities that appear in quantum information theory. We are interested in operational quantities of the form
\begin{align}\label{eq:m_quantity}
\mathcal{M}(\rho):=\min_{\sigma\in\mathcal{C}}\mathbb{D}(\rho\|\sigma)\,,
\end{align}
where $\mathbb{D}(\cdot\|\cdot)$ stands for any relative entropy, measured relative entropy, or R\'enyi variant thereof, and $\mathcal{C}$ denotes some convex, compact set of states. For Umegaki's relative entropy $\mathbb{D} \equiv D$, prominent examples for $\mathcal{C}$ include the set of
\begin{itemize}
\setlength\itemsep{0.1em}
\item separable states, giving rise to the relative entropy of entanglement~\cite{vedral98}.
\item positive partial transpose states, leading to the Rains bound on entanglement distillation~\cite{rains01}.
\item non-distillable states, leading to bounds on entanglement distillation~\cite{vedral99}.
\item quantum Markov states, leading to insights about the robustness properties of these states~\cite{linden08}.
\item locally recoverable states, leading to bounds on the quantum conditional mutual information~\cite{fawzirenner14,seshadreesan14,brandao14}.
\item $k$-extendible states, leading to bounds on squashed entanglement~\cite{liwinter14}.
\end{itemize}
Other examples are conditional R\'enyi entropies which are defined by optimizing the sandwiched R\'enyi relative entropy over a convex set of product states with a fixed marginal, see, e.g., \cite{tomamichel13}.

A central question is what properties of the underlying relative entropy $\mathbb{D}$ translate to properties of the induced measure $\mathcal{M}(\cdot)$. For example, all the relative entropies discussed in this paper are superadditive on tensor product states in the sense that 
\begin{align}\label{eq:additivity_rel_ent}
\mathbb{D}(\rho_1\otimes\rho_2\|\sigma_1\otimes\sigma_2) \geq
\mathbb{D}(\rho_1\|\sigma_1) + \mathbb{D}(\rho_2\|\sigma_2)\,.
\end{align}
We might then ask if we also have
\begin{align}
\min_{\sigma_{12}\in\mathcal{C}_{12}}\mathbb{D}(\rho_1 \otimes \rho_2 \| \sigma_{12}) = \mathcal{M}(\rho_1 \otimes\rho_2)\stackrel{?}{\geq}\mathcal{M}(\rho_1)+\mathcal{M}(\rho_2) = \min_{\sigma_1\in\mathcal{C}_{1}}\mathbb{D}(\rho_1 \| \sigma_1) + \min_{\sigma_2\in\mathcal{C}_{2}}\mathbb{D}( \rho_2 \| \sigma_2) \,. \label{eq:addquestion}
\end{align}
To study questions like this we propose to make use of the variational characterizations of the form
\begin{align}
\mathbb{D}(\rho\|\sigma)=\sup_{\omega>0}f(\rho,\sigma,\omega)\quad\text{in order to write}\quad\mathcal{M}(\rho)=\min_{\sigma\in\mathcal{C}}\sup_{\omega>0}f(\rho,\sigma,\omega)=\sup_{\omega>0}\min_{\sigma\in\mathcal{C}}f(\rho,\sigma,\omega)\,,
\end{align}
where we made use of Sion's minimax theorem~\cite{sion58} for the last equality. We note that the conditions of the minimax theorem are often fulfilled. The minimization over $\sigma\in\mathcal{C}$ then typically simplifies and is a convex or even semidefinite optimization. (As an example, for the measured relative entropies the objective function becomes linear in $\sigma$.) We can then use strong duality of convex optimization to rewrite this minimization as a maximization problem~\cite{boyd04}:
\begin{align}
\min_{\sigma\in\mathcal{C}}f(\rho,\sigma,\omega)=\max_{\bar{\sigma}\in\bar{\mathcal{C}}}\bar{f}(\rho,\bar{\sigma},\omega)\,.
\end{align}
This leads to the expression
\begin{align}\label{eq:dualm_quantity}
\mathcal{M}(\rho)=\sup_{\omega>0}\max_{\bar{\sigma}\in\bar{\mathcal{C}}}\bar{f}(\rho,\bar{\sigma},\omega)\,,
\end{align}
which, in contrast to the definition of $\mathcal{M}(\rho)$ in~\eqref{eq:m_quantity}, only involves maximizations. This often gives useful insights about $\mathcal{M}(\rho)$. As an example, let us come back to the question of superadditivity raised in~\eqref{eq:addquestion}. We want to argue that the following two conditions on $\bar{f}$ and $\bar{\mathcal{C}}$ imply superadditivity. First, we need that the function $\bar{f}$ is superadditive itself, i.e.\ we require that
\begin{align}
\bar{f}(\rho_1 \otimes \rho_2, \bar{\sigma}_1 \otimes \bar{\sigma}_2, \omega_1 \otimes \omega_2 ) \geq \bar{f}(\rho_1, \bar{\sigma}_1, \omega_1) + \bar{f}(\rho_2, \bar{\sigma}_2, \omega_2) \,.
\end{align}
And second, we require that the sets $\bar{\mathcal{C}}$ are closed under tensor products in the sense that $\bar{\sigma}_1 \in \bar{\mathcal{C}}_1$ and $\bar{\sigma}_2 \in \bar{\mathcal{C}}_2$ imply that $\bar{\sigma}_1 \otimes \bar{\sigma}_2 \in \bar{\mathcal{C}}_{12}$.
Using these two properties, we deduce that
\begin{align}
\mathcal{M}(\rho_1 \otimes \rho_2) \geq \bar{f}(\rho_1 \otimes \rho_2, \bar{\sigma}_1 \otimes \bar{\sigma}_2, \omega_1 \otimes \omega_2 )\geq \bar{f}(\rho_1, \bar{\sigma}_1, \omega_1) + \bar{f}(\rho_2, \bar{\sigma}_2, \omega_2)
\end{align}
for any $\omega_1, \omega_2 > 0$ and any $\bar{\sigma}_1 \in \bar{\mathcal{C}}_1$, $\bar{\sigma}_2 \in \bar{\mathcal{C}}_2$. Hence, the inequalities also hold true if we maximize over these variables, implying superadditivity.


\subsection{Relative Entropy of Recovery}

In the following we denote multipartite quantum systems using capital letters, e.g., $A$, $B$, $C$. The set of density operators on $A$ and $B$ is then denoted $\cS(AB)$, for example. We also use these letters as subscripts to indicate which systems operators act on.

As a concrete application, we study the additivity properties of the \emph{relative entropy of recovery} defined as~\cite{seshadreesan14,brandao14,hirche15}
\begin{align}\label{eq:rel_ent_rec}
D^{\rec}(\rho_{AD}\|\sigma_{AE}):=\inf_{\Gamma_{E\to D}}D(\rho_{AD}\|(\cI_{A}\otimes\Gamma_{E\to D})(\sigma_{AE}))\,,
\end{align}
where the infimum is taken over all trace-preserving completely positive maps $\Gamma_{E \to D}$. (We restrict to $\sigma_A>0$ such that the quantity is surely finite and the infimum becomes a minimum.) 
One motivation for studying the additivity properties of the relative entropy of recovery is the study of lower bounds on the quantum conditional mutual information and strengthenings of the data-processing inequality for the relative entropy~\cite{liwinter14,fawzirenner14,seshadreesan14,brandao14,berta15,sutter15,renner15,wilde15,junge15}. In particular, \cite[Prop.~3]{brandao14} shows that
\begin{align}\label{eq:cqmi_lower}
I(A:C|B)\geq\lim_{n\to\infty}\frac{1}{n}D^{\rec}\left(\rho_{ABC}^{\otimes n}\|\rho_{AB}^{\otimes n}\right)\,,
\end{align}
where the systems are understood as $D = BC$ and $E = B$. To obtain a lower bound that does not involve limits, the authors of~\cite{brandao14} use a Finetti-type theorem to show that 
\begin{align}\label{eq:second_step_bhos}
\lim_{n \to \infty} \frac{1}{n}D^{\rec}(\rho_{ABC}^{\otimes n} \| \rho_{AB}^{\otimes n}) \geq D^{\M, \rec}(\rho_{ABC} \| \rho_{AB})\,.
\end{align}
with the \emph{measured relative entropy of recovery} defined as~\cite{brandao14} (see also~\cite[Rmk.~6]{seshadreesan14}),
\begin{align}\label{eq:rel_ent_rec2}
D^{\M,\rec}(\rho_{AD}\|\sigma_{AE})&:=\inf_{\Gamma_{E\to D}}D^{\M}(\rho_{AD} \|(\cI_{A}\otimes\Gamma_{E\to D})(\sigma_{AE}))\,.
\end{align}
(We restrict to $\sigma_A>0$ such that the quantity is surely finite and the infimum becomes a minimum.) This gives an interpretation for the conditional mutual information in terms of recoverability.

The measured relative entropy is superadditive and if this property would translate to $D^{\M,\rec}$ we would get an alternative proof for the step~\eqref{eq:second_step_bhos}. Using the variational expression for the measured relative entropy (Lemma~\ref{lm:var_rel}) together with strong duality for semidefinite programming, we find the following alternative characterization for $D^{\M,\rec}$.

\begin{lemma}\label{lem:recovery_measured}
Let $\rho_{AD}\in\cS(AD)$ and $\sigma_{AE}\in\cS(AE)$ with $\sigma_A>0$, and let $\sigma_{AEF}$ be a purification of $\sigma_{AE}$. Then, we have
\begin{align}\label{eq:measured_dual}
\begin{aligned}
D^{\M, \rec}(\rho_{AD} \| \sigma_{AE})=\;&\textnormal{maximize}
& & \tr[\rho_{AD} \log R_{AD}]  \\
& \textnormal{subject to}
& &S_{AF}>0,\;R_{AD}>0\\
& & &\id_D\otimes S_{AF}\geq R_{AD}\otimes\id_F\\
& & &\tr[S_{AF}\sigma_{AF}]=1\,.
\end{aligned}
\end{align}
\end{lemma}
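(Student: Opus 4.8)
The plan is to combine the variational expression for $D^{\M}$ from Lemma~\ref{lm:var_rel} with the minimization over recovery maps $\Gamma_{E\to D}$, and then dualize. Starting from \eqref{eq:rel_ent_rec2}, I would write
\begin{align*}
D^{\M,\rec}(\rho_{AD}\|\sigma_{AE}) = \inf_{\Gamma_{E\to D}}\ \sup_{\omega_{AD}>0}\ \tr[\rho_{AD}\log\omega_{AD}] + 1 - \tr[(\cI_A\otimes\Gamma_{E\to D})(\sigma_{AE})\,\omega_{AD}]\,,
\end{align*}
using the second form in \eqref{eq:lemma3}. The objective is linear (hence concave) in $\omega_{AD}$ and linear (hence convex) in $\Gamma_{E\to D}$, the set of trace-preserving completely positive maps is convex and compact, so Sion's minimax theorem lets me swap $\inf$ and $\sup$. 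The inner minimization over $\Gamma_{E\to D}$ for fixed $\omega_{AD}$ is then a semidefinite program: using the Choi representation $J_\Gamma$ of $\Gamma_{E\to D}$, the term $\tr[(\cI_A\otimes\Gamma_{E\to D})(\sigma_{AE})\,\omega_{AD}]$ becomes linear in $J_\Gamma\geq 0$ subject to the partial-trace (trace-preservation) constraint.

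The second step is to take the SDP dual of this inner minimization. Writing $\Gamma_{E\to D}(X_E) = \tr_E[J_{ED}(X_E^T\otimes\id_D)]$ with $J_{ED}\geq 0$ and $\tr_D J_{ED}=\id_E$, and using a purification $\sigma_{AEF}$ of $\sigma_{AE}$ so that $\sigma_{AE}=\tr_F\sigma_{AEF}$, the quantity $\tr[(\cI_A\otimes\Gamma_{E\to D})(\sigma_{AE})\,\omega_{AD}]$ can be rewritten as a trace against $J_{ED}$ involving $\sigma_{AEF}$ and $\omega_{AD}$. Introducing a Lagrange multiplier $S_{AF}$ (more precisely an operator on the appropriate system, matched to the $\tr_D J_{ED}=\id_E$ constraint via the purifying system) for the trace-preservation constraint, the dual produces the constraint $\id_D\otimes S_{AF}\geq R_{AD}\otimes\id_F$ after absorbing $\omega_{AD}$ into a renamed variable $R_{AD}$. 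The normalization $\tr[S_{AF}\sigma_{AF}]=1$ arises from the $+1-\tr[\cdots]$ structure once the minimax value is collected: effectively the $\omega$-scaling invariance used in Lemma~\ref{lm:var_rel} lets me impose $\tr[(\cI_A\otimes\Gamma_{E\to D})(\sigma_{AE})\,\omega_{AD}]=1$, which dualizes to the stated equality constraint on $S_{AF}$. Strong duality holds because the primal SDP has a strictly feasible point (e.g. $\Gamma_{E\to D}$ a fixed full-rank channel), so Slater's condition applies and there is no duality gap; combined with the already-justified minimax swap this gives \eqref{eq:measured_dual}.

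The main obstacle will be bookkeeping the Choi-matrix dualization correctly, in particular tracking which systems the dual variable $S_{AF}$ lives on and getting the purifying system $F$ to appear in the right place. The cleanest route is to express everything in terms of the purification $\sigma_{AEF}$ from the outset: then $\tr[(\cI_A\otimes\Gamma_{E\to D})(\sigma_{AE})\,\omega_{AD}] = \tr[\sigma_{AEF}\,((\cI_{AF}\otimes\Gamma_{E\to D}^\dagger)(\omega_{AD}\otimes\id_F))]$ up to the usual transpose/adjoint identifications, so that the channel constraint and the operator inequality both get stated on $AF$ together with $D$. I also need to double-check the feasibility direction: that $R_{AD}>0$ and $S_{AF}>0$ can be taken strict (rather than just positive semidefinite) in the maximization, which follows because $\rho_{AD}$ appears only through $\tr[\rho_{AD}\log R_{AD}]$ and one can perturb any optimal $R_{AD}$ slightly, and correspondingly relax $S_{AF}$; this is the same lower-semicontinuity/compactness argument used repeatedly in Lemma~\ref{lm:var_rel} and Theorem~\ref{th:general_rel}. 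Everything else is a routine SDP-duality computation.
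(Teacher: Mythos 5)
Your proposal follows essentially the same route as the paper: parametrize the recovery map by its Choi state taken relative to the purification $\sigma_{AEF}$ (so the dual variable naturally lives on $AF$), plug in the variational formula of Lemma~\ref{lm:var_rel}, swap $\inf$ and $\sup$ via Sion's minimax theorem, dualize the inner SDP using Slater's condition, and finally impose $\tr[S_{AF}\sigma_{AF}]=1$ by a joint rescaling of $R_{AD}$ and $S_{AF}$ together with $\log x\leq x-1$. The only slip is the claim that the objective is ``linear'' in $\omega_{AD}$ --- it is only concave there, by operator concavity of $\log$ --- but concavity is all that Sion's theorem requires, so the argument goes through as in the paper.
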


\begin{proof}
We write
\begin{align}
\Gamma_{E\to D}(\sigma_{AE})= \tr_{F}\big[\Gamma_{E\to D}(\sigma_{AEF}) \big]= \tr_{F} \Big[\Gamma_{E \to D} \big( \sqrt{\sigma_{AF}} \Psi_{AF:E} \sqrt{\sigma_{AF}} \big) \Big]\,,
\end{align}
where we denote by $\Psi_{AF:E}$ the (unnormalized) maximally entangled state between $AF:E$ in the basis of the Schmidt decomposition of $\sigma_{AEF}$ with respect to $AF:E$. With this we define the Choi-Jamio{\l}kowski state (unnormalized) of the map $\Gamma_{E\to D}$ as
\begin{align}\label{eq:choistate}
\tau_{ADF} = \Gamma_{E \to D} \big( \Psi_{AF:E} \big),\quad\tau_{AF}=\Pi_{AF}^{\sigma}\,,
\end{align}
where $\Pi_{AF}^{\sigma}$ denotes the projector onto the support of $\sigma_{AF}$. Hence, we can write
\begin{align}\label{eq:choi_state}
\Gamma_{E\to D}(\sigma_{AE}) = \tr_F\big[\sqrt{\sigma_{AF}} \tau_{ADF} \sqrt{\sigma_{AF}}\big]\,,
\end{align}
and thus we can express the optimization problem for $\Gamma_{E \to D}$ in terms of the Choi-Jamio{\l}kowski state in~\eqref{eq:choistate}. Together with the variational characterization of the measured relative entropy (Lemma~\ref{lm:var_rel}) we find
\begin{align}
D^{\M, \rec}(\rho_{AD} \| \sigma_{AE}) = \min_{\tau_{ADF} \in \mathrm{Rec}(\sigma_{AE})} \sup_{R_{AD} > 0}  \tr[\rho_{AD} \log R_{AD}] + 1 - \tr\left[\tau_{ADF}\sqrt{\sigma_{AF}}R_{AD}\sqrt{\sigma_{AF}}\right]\,,
\end{align}
where $\mathrm{Rec}(\sigma_{AE}):= \{\tau_{ADF}\geq0,\;\tau_{AF}=\Pi_{AF}^{\sigma}\}$. We now apply Sion's minimax theorem~\cite{sion58} to exchange the minimum with the supremum. The theorem is applicable as $\mathrm{Rec}(\sigma_{AE})$ is convex and compact and $\{ \omega_{AD} > 0\}$ is convex. Moreover, as the logarithm is operator concave, the function
\begin{align}
R_{AD} \mapsto \tr[\rho_{AD} \log R_{AD}] + 1 - \tr\left[\tau_{ADF}\sqrt{\sigma_{AF}}R_{AD}\sqrt{\sigma_{AF}}\right]
\end{align}
is concave for any fixed $\tau_{ADF}$. Finally, for any fixed $R_{AD}$, the function being optimized is linear on $\tau_{ADF}$. As a result,
\begin{align}
D^{\M, \rec}(\rho_{AD} \| \sigma_{AE})=\sup_{R_{AD} > 0}\tr[\rho_{AD} \log R_{AD}] + 1 - \max_{\tau_{ADF} \in \mathrm{Rec}(\sigma_{AE})}\tr\left[\tau_{ADF}\sqrt{\sigma_{AF}}R_{AD}\sqrt{\sigma_{AF}}\right]\,.
\end{align}
Now for a fixed $R_{AD}>0$, the inner maximization is a semidefinite program for which we can write the following programs:
\begin{align}\label{eq:convexdual}
\begin{matrix}
\begin{array}{rl}
\text{maximize:}\quad & \tr\left[\tau_{ADF}\sqrt{\sigma_{AF}}R_{AD}\sqrt{\sigma_{AF}}\right] \\
\text{subject to:}\quad &  \tau_{ADF}\geq0 \\
&\tau_{AF}=\Pi_{AF}^{\sigma}
\end{array}
&\qquad
\begin{array}{rl}
\text{minimize:}\quad & \tr[S_{AF}\sigma_{AF}] \\
\text{subject to:}\quad & \id_D\otimes S_{AF}\geq R_{AD}\otimes\id_F\,.\\~
\end{array}  
\end{matrix}
\end{align}
As the primal problem is strictly feasible, Slater's condition (see, e.g., \cite{boyd04}) is satisfied and we have strong duality. This leads to the expression:
\begin{align}
\begin{aligned}
D^{\M, \rec}(\rho_{AD} \| \sigma_{AE})=\;&\textnormal{maximize}
& & \tr[\rho_{AD} \log R_{AD}]+1-\tr[S_{AF}\sigma_{AF}]\\
& \textnormal{subject to}
& &S_{AF}>0,\;R_{AD}>0\\
& & &\id_D\otimes S_{AF}\geq R_{AD}\otimes\id_F\,.
\end{aligned}
\end{align}
Observe now that we can restrict the optimization to $\tr[S_{AF}\sigma_{AF}] = 1$. In fact, for arbitrary feasible $R_{AD}$ and $S_{AF}$, we can define
\begin{align}
\tilde{R}_{AD} = \frac{R_{AD}}{\tr[S_{AF}\sigma_{AF}]}\quad\mathrm{and}\quad\tilde{S}_{AF} = \frac{S_{AF}}{\tr[S_{AF}\sigma_{AF}]}\,.
\end{align}
The constraint $\id_D\otimes S_{AF}\geq R_{AD}\otimes\id_F$ is still satisfied and the value of the objective function can only increase,
\begin{align}
\tr[\rho_{AD} \log \tilde{R}_{AD}] &= \tr[\rho_{AD} \log R_{AD}] - \tr[\rho_{AD}] \log \tr[S_{AF}\sigma_{AF}]\label{eq:increase1}\\
 & \geq \tr[\rho_{AD} \log R_{AD}] - \tr[S_{AF}\sigma_{AF}]+1\,,\label{eq:increase2}
\end{align}
where we used the inequality $\log x \leq x - 1$ and that $\tr[\rho_{AD}] = 1$. This concludes the proof.
\end{proof}

This readily implies that $D^{\M,\rec}$ is indeed superadditive.

\begin{prop}\label{thm:super_additive_rel_ent_rec}
Let $\rho_{AD}\in\cS(AD)$, $\tau_{A'D'}\in\cS(A'D')$, $\sigma_{AE}\in\cS(AE)$, and $\omega_{A'E'}\in\cS(A'E')$ with $\sigma_A,\omega_{A'}>0$. Then, we have
\begin{align}
D^{\M,\rec}(\rho_{AD}\otimes\tau_{A'D'}\|\sigma_{AE}\otimes\omega_{A'E'})\geq D^{\M,\rec}(\rho_{AD}\|\sigma_{AE})+D^{\M,\rec}(\tau_{A'D'}\|\omega_{A'E'})\,.
\end{align}
\end{prop}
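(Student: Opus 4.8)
The plan is to combine the variational formula of Lemma~\ref{lem:recovery_measured} with the general superadditivity recipe from Section~\ref{sec:exploiting}: the dual objective $\tr[\rho_{AD}\log R_{AD}]$ is in fact exactly additive under tensor products of the optimization variables, and the feasible set is closed under such tensor products. Concretely, I would fix purifications $\sigma_{AEF}$ of $\sigma_{AE}$ and $\omega_{A'E'F'}$ of $\omega_{A'E'}$. Then $\sigma_{AEF}\otimes\omega_{A'E'F'}$ is a purification of $\sigma_{AE}\otimes\omega_{A'E'}$ with purifying system $FF'$, and its reduced state on the joint $AF$-system (that is, on $AA'FF'$) is $\sigma_{AF}\otimes\omega_{A'F'}$. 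Since $\sigma_A,\omega_{A'}>0$ gives $(\sigma_{AE}\otimes\omega_{A'E'})_{AA'}=\sigma_A\otimes\omega_{A'}>0$, Lemma~\ref{lem:recovery_measured} applies to all three quantities in the statement, using this purification for the product.

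Next I would take any feasible pair $(R_{AD},S_{AF})$ for the problem computing $D^{\M,\rec}(\rho_{AD}\|\sigma_{AE})$ and any feasible pair $(R'_{A'D'},S'_{A'F'})$ for the one computing $D^{\M,\rec}(\tau_{A'D'}\|\omega_{A'E'})$, and propose $R_{AD}\otimes R'_{A'D'}$ and $S_{AF}\otimes S'_{A'F'}$ as a candidate point for the product problem. Strict positivity of both tensor products is immediate. The normalization constraint holds since $\tr[(S_{AF}\otimes S'_{A'F'})(\sigma_{AF}\otimes\omega_{A'F'})]=\tr[S_{AF}\sigma_{AF}]\,\tr[S'_{A'F'}\omega_{A'F'}]=1$. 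For the operator inequality, I would use $\id_D\otimes S_{AF}\geq R_{AD}\otimes\id_F\geq 0$ and $\id_{D'}\otimes S'_{A'F'}\geq R'_{A'D'}\otimes\id_{F'}\geq 0$ together with the elementary fact that $0\leq Y_1\leq X_1$ and $0\leq Y_2\leq X_2$ imply $Y_1\otimes Y_2\leq X_1\otimes X_2$; after reordering the tensor factors so that the combined $A$-, $D$-, and $F$-systems become $AA'$, $DD'$, and $FF'$, this is precisely $\id_{DD'}\otimes(S_{AF}\otimes S'_{A'F'})\geq (R_{AD}\otimes R'_{A'D'})\otimes\id_{FF'}$, the required feasibility constraint.

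Finally, since $\log(R_{AD}\otimes R'_{A'D'})=\log R_{AD}\otimes\id_{A'D'}+\id_{AD}\otimes\log R'_{A'D'}$, the value of the product objective at this feasible point is exactly $\tr[\rho_{AD}\log R_{AD}]+\tr[\tau_{A'D'}\log R'_{A'D'}]$. Taking the supremum over the two feasible sets separately (the product objective evaluated at any feasible point is a lower bound on $D^{\M,\rec}$ of the product) then yields the desired inequality $D^{\M,\rec}(\rho_{AD}\otimes\tau_{A'D'}\|\sigma_{AE}\otimes\omega_{A'E'})\geq D^{\M,\rec}(\rho_{AD}\|\sigma_{AE})+D^{\M,\rec}(\tau_{A'D'}\|\omega_{A'E'})$.

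The whole argument is short once Lemma~\ref{lem:recovery_measured} is available; the only points requiring a little care are the manipulation of the operator inequality under tensor products combined with the reordering of tensor factors, and the (routine) observation that a purification of a tensor product may be taken as the tensor product of purifications, so that the dual semidefinite program for the product genuinely factorizes over the two parties. I do not anticipate any real obstacle beyond this bookkeeping.
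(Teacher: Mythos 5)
Your proposal is correct and follows essentially the same route as the paper: take feasible pairs $(R_{AD},S_{AF})$ and $(R_{A'D'},S_{A'F'})$ from the dual formula of Lemma~\ref{lem:recovery_measured}, check that their tensor products remain feasible (positivity, normalization, and the operator inequality via positivity of tensoring with a positive operator), and use additivity of $\tr[\rho\log R]$ under tensor products before taking suprema. Your extra remarks about choosing the product purification and reordering tensor factors are exactly the bookkeeping the paper leaves implicit, so there is nothing to correct.
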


\begin{proof}
Given feasible operators $R_{AD},S_{AF}$ for the quantity $D^{\M,\rec}(\rho_{AD}\|\sigma_{AE})$ and feasible operators $R_{A'D'},S_{A'F'}$ for the quantity $D^{\M,\rec}(\tau_{A'D'}\|\omega_{A'E'})$, we have
\begin{align}
\id_D\otimes S_{AF}\geq R_{AD}\otimes\id_F \ &\land \ \id_{D'}\otimes S_{A'F'}\geq R_{A'D'}\otimes\id_{F'} \nonumber\\
&\implies\nonumber\\
\id_{DD'}\otimes S_{AF}\otimes S_{A'F'}&\geq R_{AD}\otimes R_{A'D'}\otimes\id_{FF'}\,.\label{eq:feasible}
\end{align}
Here we used that $A \geq B \implies M \otimes A \geq M \otimes B$ for $M \geq 0$, which holds since taking the tensor product with $M$ is a positive map. Moreover, we have $\tr\big[(S_{AF}\otimes S_{A'F'})(\sigma_{AF}\otimes\sigma_{A'F'})\big]=1$. In other words, $(R_{AD}\otimes R_{A'D'},S_{AF}\otimes S_{A'F'})$ is a feasible pair in the expression~\eqref{eq:measured_dual} for $D^{\M,\rec}(\rho_{AD}\otimes\tau_{A'D'}\|\sigma_{AE}\otimes\omega_{A'E'})$ and we get
\begin{align}
D^{\M,\rec}(\rho_{AD}\otimes\tau_{A'D'}\|\sigma_{AE}\otimes\omega_{A'E'})
&\geq \tr\left[\left(\rho_{AD}\otimes\tau_{A'D'}\right) \log \left(R_{AD}\otimes R_{A'D'}\right)\right] \\
&= \tr\left[\rho_{AD}\log R_{AD}\right] + \tr\left[\tau_{A'D'}\log R_{A'D'}\right]\,.
\end{align}
Taking the supremum over feasible $(R_{AD},S_{AF})$ and $(R_{A'D'},S_{A'F'})$, we get the claimed superadditivity.
\end{proof}

Now, if the relative entropy of recovery would be superadditive we could strengthen~\eqref{eq:second_step_bhos} to
\begin{align}
\lim_{n \to \infty} \frac{1}{n}D^{\rec}(\rho_{ABC}^{\otimes n} \| \rho_{AB}^{\otimes n})\stackrel{?}{\geq}D^{\rec}(\rho_{ABC} \| \rho_{AB})\,.
\end{align}
This together with~\eqref{eq:cqmi_lower} would lead to a stronger lower bound on the conditional mutual information. Using strong duality for convex programming we can write $D^{\rec}$ in a dual form similar to Lemma~\ref{lem:recovery_measured}.

\begin{lemma}\label{lem:entropy_of_recovery}
Let $\rho_{AD}\in\cS(AD)$ and $\sigma_{AE}\in\cS(AE)$ with $\sigma_A>0$, and let $\sigma_{AEF}$ be a purification of $\sigma_{AE}$. Then, we have
\begin{align}
\begin{aligned}
D^{\rec}(\rho_{AD} \| \sigma_{AE})=\;&\textnormal{maximize}
& & \tr[\rho_{AD} \log \rho_{AD}]-D^{\M}(\rho_{AD}\|R_{AD}) \\
& \textnormal{subject to}
& &S_{AF}>0,\;R_{AD}>0\\
& & &\id_D\otimes S_{AF}\geq R_{AD}\otimes\id_F\\
& & &\tr[S_{AF}\sigma_{AF}]=1\,,
\end{aligned}
\end{align}
where $R_{AD}$ is not normalized and the measured relative entropy term is in general negative.
\end{lemma}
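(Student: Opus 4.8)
The plan is to follow the same template as Lemma~\ref{lem:recovery_measured}, but with the variational expression for Umegaki's relative entropy~\eqref{eq:petz_var} in place of the one for the measured relative entropy. First I would write $D^{\rec}(\rho_{AD}\|\sigma_{AE}) = \min_{\Gamma_{E\to D}} D(\rho_{AD}\|\Gamma_{E\to D}(\sigma_{AE}))$ and parametrize the trace-preserving completely positive maps $\Gamma_{E\to D}$ by their (normalized-on-$AF$) Choi operators $\tau_{ADF}\in\mathrm{Rec}(\sigma_{AE})$ exactly as in~\eqref{eq:choistate}--\eqref{eq:choi_state}, so that $\Gamma_{E\to D}(\sigma_{AE}) = \tr_F[\sqrt{\sigma_{AF}}\,\tau_{ADF}\,\sqrt{\sigma_{AF}}]$. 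Substituting Petz' variational formula $D(\rho\|\sigma) = \sup_{\omega>0} \tr[\rho\log\omega] + 1 - \tr[\exp(\log\sigma + \log\omega)]$ for the inner relative entropy gives a $\min_\tau \sup_\omega$ expression; the function is linear (hence concave) in $\tau_{ADF}$ and, since $\omega\mapsto\exp(\log\sigma+\log\omega)$ composed with trace is convex in $\omega$ (as used implicitly in~\cite{petz88}), concave in the relevant variable, so Sion's minimax theorem applies over the convex compact set $\mathrm{Rec}(\sigma_{AE})$ and we may swap to $\sup_\omega \min_\tau$.

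Next I would, for fixed $\omega$, recognize the inner minimization over $\tau_{ADF}\in\mathrm{Rec}(\sigma_{AE})$ as a semidefinite program and dualize it. The objective contains the term $-\tr[\exp(\log\sigma_{AF}+\log\omega_{AD})\,(\text{something in }\tau)]$; more precisely, writing $R_{AD}$ in place of $\omega_{AD}$ and keeping track of the $\sqrt{\sigma_{AF}}$ conjugation, one gets a minimization of an affine functional of $\tau$ subject to $\tau_{ADF}\geq 0$, $\tau_{AF}=\Pi^\sigma_{AF}$, structurally identical to the primal SDP in~\eqref{eq:convexdual}. Its dual introduces the operator $S_{AF}$ with the constraint $\id_D\otimes S_{AF}\geq R_{AD}\otimes\id_F$ and objective $\tr[S_{AF}\sigma_{AF}]$, and strict feasibility of the primal gives strong duality. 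The key point then is to identify the remaining $\sup_{R_{AD}>0}$ of $\tr[\rho_{AD}\log R_{AD}] + 1 - \tr[\exp(\log\sigma'+\log R_{AD})]$-type term: comparing with Petz' formula~\eqref{eq:petz_var} applied with second argument $R_{AD}$ shows that the $R_{AD}$-independent maximization collapses to $D(\rho_{AD}\|R_{AD})$ being subtracted, or equivalently, recalling that $D(\rho\|\rho) = 0$ and rearranging, one extracts the term $\tr[\rho_{AD}\log\rho_{AD}] - D^{\M}(\rho_{AD}\|R_{AD})$ via the variational characterization of $D^{\M}$ in Lemma~\ref{lm:var_rel}.

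The normalization $\tr[S_{AF}\sigma_{AF}]=1$ is then imposed by the same rescaling trick used at the end of the proof of Lemma~\ref{lem:recovery_measured}: replacing $R_{AD}\to R_{AD}/\tr[S_{AF}\sigma_{AF}]$ and $S_{AF}\to S_{AF}/\tr[S_{AF}\sigma_{AF}]$ preserves the constraint $\id_D\otimes S_{AF}\geq R_{AD}\otimes\id_F$ and, using $\log x\leq x-1$ together with $\tr[\rho_{AD}]=1$ (and noting $D^{\M}(\rho\|cR) = D^{\M}(\rho\|R) - \log c$), can only increase the objective, so the optimum is attained on the normalized slice.

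The main obstacle I expect is bookkeeping the $\sqrt{\sigma_{AF}}$-conjugation together with Petz' $\exp(\log\sigma+\log\omega)$ expression: unlike the measured case, where the objective was multilinear and the SDP dual was transparent, here one must be careful that after the minimax swap the inner problem over $\tau$ is genuinely a \emph{linear} SDP in $\tau$ (it is, since $\tr[\exp(\log\sigma_{AF}+\log R_{AD})]$ does not depend on $\tau$ once the Choi parametrization is used — the $\tau$-dependence enters only through $\Gamma_{E\to D}(\sigma_{AE}) = \tr_F[\sqrt{\sigma_{AF}}\tau_{ADF}\sqrt{\sigma_{AF}}]$ inside $\log$, which is precisely why we needed Petz' formula rather than the definition), and that the resulting dual constraint is exactly $\id_D\otimes S_{AF}\geq R_{AD}\otimes\id_F$ with no stray support projectors. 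A secondary subtlety is justifying that $D^{\M}(\rho_{AD}\|R_{AD})$ is finite and the expression well-defined even though $R_{AD}$ is unnormalized and the term may be negative — this follows because $\rho_{AD}>0$ forces $\rho_{AD}\ll R_{AD}$ is not needed in the form used, and the supremum formulation of $D^{\M}$ in Lemma~\ref{lm:var_rel} handles unnormalized second arguments directly via the scaling invariance noted there.
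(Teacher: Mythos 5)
There is a genuine gap at the core of your argument. Once you insert Petz' formula~\eqref{eq:petz_var} with second argument $\Gamma_{E\to D}(\sigma_{AE})=\tr_F[\sqrt{\sigma_{AF}}\,\tau_{ADF}\,\sqrt{\sigma_{AF}}]$, the $\tau$-dependence sits inside $\tr\big[\exp\big(\log \tr_F[\sqrt{\sigma_{AF}}\,\tau_{ADF}\,\sqrt{\sigma_{AF}}]+\log\omega\big)\big]$, i.e.\ inside a logarithm inside a trace exponential. This is convex but emphatically not linear (nor affine) in $\tau_{ADF}$, so the inner problem is not ``structurally identical to the primal SDP in~\eqref{eq:convexdual}'', and the asserted dual with constraint $\id_D\otimes S_{AF}\geq R_{AD}\otimes\id_F$ and objective $\tr[S_{AF}\sigma_{AF}]$ does not follow from SDP duality. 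Your parenthetical justification --- that $\tr[\exp(\log\sigma_{AF}+\log R_{AD})]$ does not depend on $\tau$ --- conflates the fixed marginal $\sigma_{AF}$ with the $\tau$-dependent recovered state $\Gamma_{E\to D}(\sigma_{AE})$. Relatedly, the step where ``one extracts $\tr[\rho_{AD}\log\rho_{AD}]-D^{\M}(\rho_{AD}\|R_{AD})$'' is asserted rather than derived, yet this is exactly the nontrivial content of the lemma (a \emph{measured} relative entropy appearing in the dual of an Umegaki-based quantity); nothing in your sketch produces a term of the form $\tr[R_{AD}\gamma_{AD}]$ against which the variational formula of Lemma~\ref{lm:var_rel} could be applied.

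The paper's route avoids the problem by never invoking Petz' formula: it splits $D(\rho_{AD}\|\gamma_{AD})=\tr[\rho_{AD}\log\rho_{AD}]-\tr[\rho_{AD}\log\gamma_{AD}]$, introduces $\gamma_{AD}$ as a primal variable tied to the Choi operator by the (relaxed) constraints $\gamma_{AD}\leq\tr_F[\sqrt{\sigma_{AF}}\,\tau_{ADF}\,\sqrt{\sigma_{AF}}]$, $\tau_{ADF}\geq0$, $\tau_{AF}\leq\Pi^{\sigma}_{AF}$, and forms the Lagrangian with multipliers $R_{AD},S_{AF}\geq0$. The measured relative entropy then appears precisely as a partial dual function: $\inf_{\gamma_{AD}>0}\{-\tr[\rho_{AD}\log\gamma_{AD}]+\tr[R_{AD}\gamma_{AD}]\}=1-D^{\M}(\rho_{AD}\|R_{AD})$ by Lemma~\ref{lm:var_rel}, while the infimum over $\tau_{ADF}\geq0$ enforces $S_{AF}\geq\sqrt{\sigma_{AF}}R_{AD}\sqrt{\sigma_{AF}}$; Slater's condition, the substitution $S_{AF}\to\sigma_{AF}^{-1/2}S_{AF}\sigma_{AF}^{-1/2}$, and the final rescaling that imposes $\tr[S_{AF}\sigma_{AF}]=1$ (this last part of your outline is fine) complete the proof. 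To salvage your outline you would have to replace the Petz-formula substitution and minimax/SDP step by this direct Lagrangian-duality computation; as written, the dualization step fails.
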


We provide a proof in Appendix~\ref{app:missing}. This is to be compared to our expression for the measured relative entropy of recovery from Lemma~\ref{lem:recovery_measured}, which can be written as
\begin{align}
\begin{aligned}
D^{\M, \rec}(\rho_{AD} \| \sigma_{AE})=\;&\textnormal{maximize}
& & \tr[\rho_{AD} \log \rho_{AD}]-D(\rho_{AD}\|R_{AD}) \\
& \textnormal{subject to}
& &S_{AF}>0,\;R_{AD}>0\\
& & &\id_D\otimes S_{AF}\geq R_{AD}\otimes\id_F\\
& & &\tr[S_{AF}\sigma_{AF}]=1\,,
\end{aligned}
\end{align}
where $R_{AD}$ is not normalized and the relative entropy term is in general negative. Unfortunately, we can not solve the open additivity question for the relative entropy of recovery using the variational expression from Lemma~\ref{lem:entropy_of_recovery}. However, we note that the argument flipped relative entropy of recovery
\begin{align}
\bar{D}^{\rec}(\sigma_{AE}\|\rho_{AD}):=\inf_{\Gamma_{E \to D}}D((\mathcal{I}_A\otimes\Gamma_{E\to D})(\sigma_{AE})\|\rho_{AD})
\end{align}
becomes additive (restricted to $\rho_A>0$ to assure finiteness and making the infimum a minimum).

\begin{prop}\label{lem:interchanged}
Let $\rho_{AD}\in\cS(AD)$ and $\sigma_{AE}\in\cS(AE)$ with $\rho_A>0$, and let $\sigma_{AEF}$ be a purification of $\sigma_{AE}$. Then, we have
\begin{align}\label{eq:interchanged1}
\begin{aligned}
\bar{D}^{\rec}(\sigma_{AE}\|\rho_{AD})=\;&\textnormal{maximize}
& & -\tr[S_{AF}\sigma_{AF}]\\
& \textnormal{subject to}
& &R_{AD}>0\\
& & &\id_D\otimes S_{AF}\geq(\log\rho_{AD}-\log R_{AD})\otimes\id_F\\
& & &\tr[R_{AD}]=1\,.
\end{aligned}
\end{align}
Moreover, for $\tau_{A'D'}\in\cS(A'D')$ with $\tau_{A'}>0$ and $\omega_{A'E'}\in\cS(A'E')$, we have
\begin{align}\label{eq:interchanged2}
\bar{D}^{\rec}(\sigma_{AE}\otimes\omega_{A'E'}\|\rho_{AD}\otimes\tau_{A'D'})=\bar{D}^{\rec}(\sigma_{AE}\|\rho_{AD})+\bar{D}^{\rec}(\omega_{A'E'}\|\tau_{A'D'})\,.
\end{align}
\end{prop}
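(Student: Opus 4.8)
The plan is to mirror the structure of the proof of Lemma~\ref{lem:recovery_measured}, but now with the arguments of the relative entropy interchanged, and then to read off additivity directly from the resulting variational formula. First I would use the purification $\sigma_{AEF}$ and the Choi--Jamio{\l}kowski representation of the recovery map exactly as in Lemma~\ref{lem:recovery_measured}, writing $(\cI_A\otimes\Gamma_{E\to D})(\sigma_{AE}) = \tr_F[\sqrt{\sigma_{AF}}\,\tau_{ADF}\,\sqrt{\sigma_{AF}}]$ with $\tau_{ADF}\in\mathrm{Rec}(\sigma_{AE})$. Then I would invoke the variational expression for Umegaki's relative entropy in~\eqref{eq:petz_variational1}--\eqref{eq:petz_var} applied to the \emph{first} argument: since $D(\mu\|\rho_{AD}) = \sup_{\omega>0}\tr[\mu\log\omega] + 1 - \tr[\exp(\log\rho_{AD}+\log\omega)]$ is hard to linearize in $\mu$, I would instead use the companion expression in which $\rho_{AD}$ sits inside the exponential and the optimization variable plays the role of $\log R_{AD}$; concretely, writing $R_{AD}>0$ and using that $D(\mu\|\rho_{AD})$ admits a dual description via $\sup$ over $R_{AD}$ of a quantity \emph{linear} in $\mu$ of the form $\tr[\mu(\log\rho_{AD}-\log R_{AD})]$ plus a normalization penalty. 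The point is to arrange that the objective becomes linear in $\tau_{ADF}$ so that Sion's minimax theorem applies, exchanging $\min_{\tau}$ with the $\sup$ over the new variable, after which the inner minimization over $\tau_{ADF}\in\mathrm{Rec}(\sigma_{AE})$ becomes a semidefinite program whose dual introduces the operator $S_{AF}$ and the constraint $\id_D\otimes S_{AF}\geq(\log\rho_{AD}-\log R_{AD})\otimes\id_F$. Strong duality holds by Slater's condition since the primal is strictly feasible, as in Lemma~\ref{lem:recovery_measured}. The normalization $\tr[R_{AD}]=1$ can be imposed without loss of generality by a rescaling argument analogous to~\eqref{eq:increase1}--\eqref{eq:increase2}.

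The main obstacle I anticipate is getting the correct variational/dual description of $D(\mu\|\rho_{AD})$ as a function of its \emph{first} argument in a form that (a) is linear in $\mu$ and (b) produces exactly the clean affine matrix inequality $\id_D\otimes S_{AF}\geq(\log\rho_{AD}-\log R_{AD})\otimes\id_F$ after dualization. Umegaki's relative entropy is jointly convex but it is the \emph{log-Laplace}-type expression~\eqref{eq:petz_var} that makes the first argument appear linearly; the subtlety is that there $\exp(\log\rho_{AD}+\log\omega)$ is convex in $\omega$ but not obviously compatible with the semidefinite-programming dualization in the recovery variable. I would therefore likely need an intermediate step: first dualize the recovery SDP for fixed $R_{AD}$ (obtaining $S_{AF}$), and only then handle the $R_{AD}$-optimization, checking carefully that the constraint set on $(R_{AD},S_{AF})$ is exactly the one claimed and that the objective collapses to $-\tr[S_{AF}\sigma_{AF}]$. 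The manipulation $\tr[\exp(\log\rho_{AD}+\log\omega)]$ versus the linear term $\tr[\rho_{AD}\omega]$ must be reconciled, presumably using that for the optimal choice the relevant operators commute or via the Golden--Thompson bound, but since here we want an \emph{exact} identity I expect the cleanest route is to absorb the $\log$ directly into the SDP constraint rather than into the objective.

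For the additivity statement~\eqref{eq:interchanged2}, the inequality ``$\geq$'' follows by the general superadditivity of Umegaki's relative entropy~\eqref{eq:additivity_rel_ent} together with the fact that tensoring recovery maps $\Gamma_{E\to D}\otimes\Gamma'_{E'\to D'}$ is a valid recovery map, so $\bar D^{\rec}$ is always superadditive without any variational input. The nontrivial direction ``$\leq$'' (subadditivity) is where the dual formula~\eqref{eq:interchanged1} does the work: given feasible $(R_{AD},S_{AF})$ and $(R_{A'D'},S_{A'F'})$ for the two single-system problems, I would form the tensor product $R_{AD}\otimes R_{A'D'}$ and $S_{AF}\otimes S_{A'F'}$, verify $\tr[R_{AD}\otimes R_{A'D'}]=1$, and check that the constraint tensorizes. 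Here the key algebraic point is that $\log(R_{AD}\otimes R_{A'D'}) = \log R_{AD}\otimes\id + \id\otimes\log R_{A'D'}$ and likewise for $\rho$, so $(\log\rho-\log R)_{ADA'D'} = (\log\rho_{AD}-\log R_{AD})\otimes\id_{A'D'} + \id_{AD}\otimes(\log\rho_{A'D'}-\log R_{A'D'})$; combining the two single-system matrix inequalities (using $X\geq Y\implies M\otimes X\geq M\otimes Y$ for $M\geq0$, exactly as in~\eqref{eq:feasible}) yields a feasible point for the joint problem whose objective value is $-\tr[S_{AF}\sigma_{AF}]-\tr[S_{A'F'}\sigma_{A'F'}]$, namely the sum of the two single-system objectives. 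Taking suprema on both sides gives ``$\leq$'', and together with the trivial ``$\geq$'' direction this establishes the claimed equality. I do not expect any obstacle in this last part beyond bookkeeping; the real content is the variational formula~\eqref{eq:interchanged1}.
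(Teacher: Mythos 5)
Your route to the dual formula \eqref{eq:interchanged1} (Choi representation, a variational expression for $D$ that is linear in its \emph{first} argument, Sion's minimax, then SDP duality over $\tau_{ADF}$) can be made to work, and it is genuinely different from the paper's proof, which instead dualizes the convex program directly: a Lagrangian in $(\gamma_{AD},\tau_{ADF})$ with multipliers $(T_{AD},S_{AF})$, an entropy-maximization step giving $\gamma_{AD}=\exp(\log\rho_{AD}-T_{AD}+\alpha\id_{AD})$ and the value $-\tr[\exp(\log\rho_{AD}-T_{AD}-\id_{AD})]$, followed by the substitution $R_{AD}=\exp(\log\rho_{AD}-T_{AD}-\id_{AD})$ and the normalization argument. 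However, you leave the key ingredient unresolved and slightly garbled: the identity you need is the elementary $D(\mu\|\rho)=\sup_{R>0}\{\tr[\mu(\log R-\log\rho)]+1-\tr[R]\}$ (Klein's inequality, equality at $R=\mu$), which is linear in $\mu$; your version $\tr[\mu(\log\rho_{AD}-\log R_{AD})]$ has the sign reversed, and your worry that \eqref{eq:petz_var} is ``hard to linearize in $\mu$'' is misplaced — it is already linear in the first argument, it just does not produce the constraint of \eqref{eq:interchanged1}. With the correct identity, no Golden--Thompson step is needed: the inner SDP over $\tau_{ADF}$ dualizes (after conjugating by $\sigma_{AF}^{\pm 1/2}$) to exactly $\id_D\otimes S_{AF}\geq(\log\rho_{AD}-\log R_{AD})\otimes\id_F$ with objective $-\tr[S_{AF}\sigma_{AF}]$, and the rescaling $R_{AD}\to R_{AD}/\tr[R_{AD}]$, $S_{AF}\to S_{AF}+\log\tr[R_{AD}]$ imposes $\tr[R_{AD}]=1$.

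The additivity part contains genuine errors: you have the two directions swapped and the wrong combined dual variable. The easy direction is ``$\leq$'': restricting to product recovery maps and using \emph{additivity} of $D$ on tensor products gives subadditivity. Your claimed ``trivial $\geq$'' fails because the optimal joint recovery map $\Gamma_{EE'\to DD'}$ need not be a product map, so the recovered joint state is not a product state and \eqref{eq:additivity_rel_ent} does not apply; also, exhibiting product recovery maps can only upper-bound an infimum. Conversely, a feasible point of the maximization \eqref{eq:interchanged1} yields a \emph{lower} bound, so your construction proves the nontrivial ``$\geq$'' (superadditivity), not the ``$\leq$'' you assign to it. Finally, the feasible pair for the joint problem must be $(R_{AD}\otimes R_{A'D'},\, S_{AF}\otimes\id_{A'F'}+\id_{AF}\otimes S_{A'F'})$, not $(R_{AD}\otimes R_{A'D'},\, S_{AF}\otimes S_{A'F'})$: since $\log\rho_{AD}-\log R_{AD}$ tensorizes additively and is not positive, you tensor each single-system operator inequality with an identity and \emph{add} them — the positivity trick $M\otimes X\geq M\otimes Y$ for $M\geq0$ used in \eqref{eq:feasible} is not available here — and only the additive choice gives the constraint and the sum objective $-\tr[S_{AF}\sigma_{AF}]-\tr[S_{A'F'}\omega_{A'F'}]$ that you write down (with $S_{AF}\otimes S_{A'F'}$ the objective would be a product). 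With these corrections, your argument coincides with the paper's proof of \eqref{eq:interchanged2}.
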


We provide a proof based on strong duality for convex programming in Appendix~\ref{app:missing}.


\subsection{R\'enyi Relative Entropy of Recovery}\label{sec:renyi_recovery}

As a generalization of the measured relative entropy of recovery~\eqref{eq:rel_ent_rec2} we define the \emph{measured R\'enyi relative entropy of recovery} for $\alpha \in (1,\infty)$ as (see also~\cite[Rmk.~6]{seshadreesan14}),
\begin{align}
&D^{\M,\rec}_\alpha(\rho_{AD}\|\sigma_{AE}) := \frac{1}{\alpha-1} \log Q^{\M,\rec}_{\alpha}(\rho_{AD}\|\sigma_{AE}) \,,\\
&\textrm{with} \quad
Q^{\M,\rec}_{\alpha}(\rho_{AD}\|\sigma_{AE}) := \inf_{\Gamma_{E\to D}} Q^{\M}_{\alpha}(\rho_{AD}\|(\cI_{A}\otimes\Gamma_{E\to D})(\sigma_{AE}))\,,
\end{align}
and analogously for $\alpha \in (0,1)$ with
\begin{align}
Q^{\M,\rec}_{\alpha}(\rho_{AD}\|\sigma_{AE}) := \sup_{\Gamma_{E\to D}} Q^{\M}_{\alpha}(\rho_{AD}\|(\cI_{A}\otimes\Gamma_{E\to D})(\sigma_{AE}))\,,
\end{align}
where we restrict to $\sigma_A>0$ such that the quantity is surely finite and the infimum/supremum is achieved. Using the variational expression for the measured R\'enyi relative entropy (Lemma~\ref{lm:var_renyi}) together with strong duality for semidefinite programming, we find the following alternative characterization for $Q^{\M,\rec}_\alpha$.

\begin{lemma}\label{lem:renyimeas_dual}
Let $\rho_{AD}\in\cS(AD)$ and $\sigma_{AE}\in\cS(AE)$ with $\sigma_A>0$, and let $\sigma_{AEF}$ be a purification of $\sigma_{AE}$. For $\alpha \in [\frac12,1)$, we have
\begin{align}
\begin{aligned}
Q^{\M,\rec}_\alpha(\rho_{AD}\|\sigma_{AE})=\;&\textnormal{minimize}
& & \tr\left[\rho_{AD}R_{AD}^{1-\frac{1}{\alpha}}\right]^\alpha\tr[S_{AF}\sigma_{AF}]^{1-\alpha}\\
& \textnormal{subject to}
& &S_{AF}>0,\;R_{AD}>0\\
& & &\id_D\otimes S_{AF}\geq R_{AD}\otimes\id_F\,.
\end{aligned}
\end{align}
Similar dual formulas hold for $\alpha \in (0,\frac12)\cup(1,\infty)$.
\end{lemma}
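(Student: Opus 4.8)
The plan is to follow the same strategy that produced Lemma~\ref{lem:recovery_measured}, but now starting from the variational expression for $Q_\alpha^{\Proj}$ in Lemma~\ref{lm:var_renyi} rather than the one for $D^{\Proj}$. For $\alpha\in[\frac12,1)$ we use the form $Q_\alpha^{\Proj}(\rho\|\sigma)=\inf_{\omega>0}\alpha\tr[\rho\omega^{1-\frac1\alpha}]+(1-\alpha)\tr[\sigma\omega]$, which is jointly convenient because it is \emph{linear in }$\sigma$. First I would write $(\cI_A\otimes\Gamma_{E\to D})(\sigma_{AE})=\tr_F[\sqrt{\sigma_{AF}}\,\tau_{ADF}\,\sqrt{\sigma_{AF}}]$ exactly as in the proof of Lemma~\ref{lem:recovery_measured}, parametrising the channel by its (unnormalised) Choi state $\tau_{ADF}\in\mathrm{Rec}(\sigma_{AE})=\{\tau_{ADF}\geq0,\ \tau_{AF}=\Pi_{AF}^\sigma\}$, so that
\begin{align}
Q^{\M,\rec}_\alpha(\rho_{AD}\|\sigma_{AE})=\inf_{\tau_{ADF}\in\mathrm{Rec}(\sigma_{AE})}\ \inf_{R_{AD}>0}\ \alpha\tr\big[\rho_{AD}R_{AD}^{1-\frac1\alpha}\big]+(1-\alpha)\tr\big[\tau_{ADF}\sqrt{\sigma_{AF}}R_{AD}\sqrt{\sigma_{AF}}\big]\,.
\end{align}
Here the two infima commute freely (both are infima), so no minimax theorem is needed at this stage; this is actually simpler than the relative-entropy case.

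Next I would fix $R_{AD}>0$ and treat the inner minimisation over $\tau_{ADF}\in\mathrm{Rec}(\sigma_{AE})$ as a semidefinite program. The objective $(1-\alpha)\tr[\tau_{ADF}\sqrt{\sigma_{AF}}R_{AD}\sqrt{\sigma_{AF}}]$ is linear in $\tau_{ADF}$ with $1-\alpha>0$, and the feasible set is the SDP-representable set $\{\tau_{ADF}\geq0,\ \tau_{AF}=\Pi_{AF}^\sigma\}$; this is exactly the same SDP as in~\eqref{eq:convexdual} (the sign $1-\alpha$ just scales the cost matrix), so strong duality via Slater gives $\min_{\tau}(1-\alpha)\tr[\tau_{ADF}\sqrt{\sigma_{AF}}R_{AD}\sqrt{\sigma_{AF}}]=(1-\alpha)\tr[S_{AF}\sigma_{AF}]$ optimised over $S_{AF}>0$ with $\id_D\otimes S_{AF}\geq R_{AD}\otimes\id_F$. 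Substituting back yields
\begin{align}
Q^{\M,\rec}_\alpha(\rho_{AD}\|\sigma_{AE})=\min_{\substack{S_{AF}>0,\ R_{AD}>0\\ \id_D\otimes S_{AF}\geq R_{AD}\otimes\id_F}}\ \alpha\tr\big[\rho_{AD}R_{AD}^{1-\frac1\alpha}\big]+(1-\alpha)\tr[S_{AF}\sigma_{AF}]\,.
\end{align}

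Finally I would pass from this ``additive'' form to the ``scalable'' product form claimed in the statement, mirroring the step from~\eqref{eq:notscalable} to~\eqref{eq:gen_alberti} in Lemma~\ref{lm:var_renyi}: the feasible set is invariant under $(R_{AD},S_{AF})\mapsto(\lambda R_{AD},\lambda S_{AF})$ for $\lambda>0$, under which $\alpha\tr[\rho R^{1-\frac1\alpha}]+(1-\alpha)\tr[S\sigma]$ becomes $\alpha\lambda^{1-\frac1\alpha}\tr[\rho R^{1-\frac1\alpha}]+(1-\alpha)\lambda\tr[S\sigma]$; optimising over $\lambda$ by the weighted AM--GM / Bernoulli argument exactly as in Lemma~\ref{lm:var_renyi} converts the minimum of the sum into the minimum of $\tr[\rho_{AD}R_{AD}^{1-\frac1\alpha}]^\alpha\tr[S_{AF}\sigma_{AF}]^{1-\alpha}$ over the same constraints, which is the asserted formula. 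The cases $\alpha\in(0,\frac12)$ and $\alpha\in(1,\infty)$ go through with the corresponding lines of Lemma~\ref{lm:var_renyi}: for $\alpha>1$ one has a $\sup$ instead of an $\inf$, so after parametrising by $\tau$ one needs Sion's minimax theorem to exchange $\sup_{R}$ with $\inf_\tau$ (justified since the objective is concave in $R$ — because $t\mapsto t^{1-1/\alpha}$ is operator concave — and linear in $\tau$, with $\mathrm{Rec}(\sigma_{AE})$ convex compact), just as in Lemma~\ref{lem:recovery_measured}; for $\alpha\in(0,\frac12)$ one uses the form $\inf_{\omega>0}\alpha\tr[\rho\omega]+(1-\alpha)\tr[\sigma\omega^{\frac{\alpha}{\alpha-1}}]$, which is linear in $\rho$ and again needs no minimax. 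The one point requiring a little care — and the closest thing to an obstacle — is checking that these dual SDPs are strictly feasible (so Slater applies) and that the objective is genuinely jointly continuous/convex-concave on the relevant domains; but all of this is identical in structure to what is already verified in the proof of Lemma~\ref{lem:recovery_measured}, so it is routine rather than hard.
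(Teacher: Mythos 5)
Your overall outline (Choi parametrization as in Lemma~\ref{lem:recovery_measured}, the additive variational form from Lemma~\ref{lm:var_renyi}, SDP duality for the inner problem, then the AM--GM step to the product form) matches the paper's, but there is a genuine error at the heart of your argument: for $\alpha\in(0,1)$ the quantity $Q^{\M,\rec}_\alpha$ is defined as a \emph{supremum} over recovery maps, $Q^{\M,\rec}_{\alpha}=\sup_{\Gamma_{E\to D}}Q^{\M}_{\alpha}(\rho_{AD}\|(\cI_A\otimes\Gamma_{E\to D})(\sigma_{AE}))$, because $D^{\M,\rec}_\alpha=\frac{1}{\alpha-1}\log Q^{\M,\rec}_\alpha$ with $\frac{1}{\alpha-1}<0$, so minimizing the divergence over channels means \emph{maximizing} $Q$. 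Hence for $\alpha\in[\frac12,1)$ the problem is $\max_{\tau_{ADF}\in\mathrm{Rec}(\sigma_{AE})}\inf_{R_{AD}>0}(\cdots)$, not $\inf_\tau\inf_R(\cdots)$, and your claim that ``the two infima commute freely, so no minimax theorem is needed'' fails. The paper's proof invokes Sion's minimax theorem exactly at this point, and its applicability is tied to the restriction $\alpha\geq\frac12$: the map $R_{AD}\mapsto\alpha\tr\big[\rho_{AD}R_{AD}^{1-\frac1\alpha}\big]+(1-\alpha)\tr\big[\tau_{ADF}\sqrt{\sigma_{AF}}R_{AD}\sqrt{\sigma_{AF}}\big]$ is convex because $t\mapsto t^{1-\frac1\alpha}$ is operator convex for $1-\frac1\alpha\in[-1,0)$, while the objective is linear in $\tau_{ADF}$ over the convex compact set $\mathrm{Rec}(\sigma_{AE})$.

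Your second step inherits this error: you assert that strong duality turns $\min_\tau\tr\big[\tau_{ADF}\sqrt{\sigma_{AF}}R_{AD}\sqrt{\sigma_{AF}}\big]$ into a minimization of $\tr[S_{AF}\sigma_{AF}]$ subject to $\id_D\otimes S_{AF}\geq R_{AD}\otimes\id_F$, calling it ``exactly the same SDP as in~\eqref{eq:convexdual}''. But the SDP in~\eqref{eq:convexdual} is a \emph{maximization} over $\tau_{ADF}$, and its dual is the stated minimization over $S_{AF}$; the dual of a minimization over $\tau_{ADF}$ would instead be a maximization with the reversed operator inequality, and following your inf-over-channels premise consistently would not reproduce the formula in the lemma. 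You land on the correct final expression only because these two sign errors cancel. The correct route is: $\sup_\tau\inf_R$, then Sion to get $\inf_R\max_\tau$, then dualize the inner maximization via~\eqref{eq:convexdual} (with $1-\alpha>0$), giving the joint minimization over $R_{AD},S_{AF}$, and finally the AM--GM rescaling, which is the one step you do have right. The same issue affects your remark for $\alpha\in(0,\frac12)$ (there too one has $\sup_\tau\inf_\omega$, so Sion is again needed); your description of the case $\alpha>1$ is in the right direction.
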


\begin{proof}
As in the proof of Lemma~\ref{lem:recovery_measured} and using the first variational formula from Lemma~\ref{lm:var_renyi}, we get from Sion's minimax theorem~\cite{sion58} that
\begin{align}
Q_\alpha^{\M,\rec}(\rho_{AD}\|\sigma_{AE})&=\max_{\tau_{ADF}\in\mathrm{Rec}(\sigma_{AE})}\inf_{R_{AD}>0}\alpha\tr\Big[\rho_{AD}R_{AD}^{1-\frac{1}{\alpha}}\Big]+(1-\alpha)\tr\left[\tau_{ADF}\sqrt{\sigma_{AF}}R_{AD}\sqrt{\sigma_{AF}}\right]\\
&=\inf_{R_{AD}>0}\max_{\tau_{ADF}\in\mathrm{Rec}(\sigma_{AE})}\alpha\tr\Big[\rho_{AD}R_{AD}^{1-\frac{1}{\alpha}}\Big]+(1-\alpha)\tr\left[\tau_{ADF}\sqrt{\sigma_{AF}}R_{AD}\sqrt{\sigma_{AF}}\right]\,.
\end{align}
The set $\mathrm{Rec}(\sigma_{AE})$ is convex and compact and $\{R_{AD}>0\}$ is convex. Moreover, the function
\begin{align}
R_{AD}\mapsto\alpha\tr\Big[\rho_{AD}R_{AD}^{1-\frac{1}{\alpha}}\Big]+(1-\alpha)\tr\left[\tau_{ADF}\sqrt{\sigma_{AF}}R_{AD}\sqrt{\sigma_{AF}}\right]
\end{align}
is convex for any fixed $\tau_{ADF}$ because of the operator convexity of $t \mapsto t^{\beta}$ with $\beta = 1 - \frac{1}{\alpha}$ for $\alpha \in [\frac12, 1)$ (with $\beta\in[-1,0)$). Finally, for a fixed $R_{AD}$, the function being optimized is linear on $\tau_{ADF}$. As in~\eqref{eq:convexdual} we then look at the convex dual of
\begin{align}
\max_{\tau_{ADF}\in\mathrm{Rec}(\sigma_{AE})}\tr\left[\tau_{ADF}\sqrt{\sigma_{AF}}R_{AD}\sqrt{\sigma_{AF}}\right]\,,
\end{align}
and find
\begin{align}
\begin{aligned}
Q^{\M,\rec}_\alpha(\rho_{AD}\|\sigma_{AE})=\;&\textnormal{minimize}
& & \alpha\tr\left[\rho_{AD}R_{AD}^{1-\frac{1}{\alpha}}\right]+(1-\alpha)\tr[S_{AF}\sigma_{AF}]\\
& \textnormal{subject to}
& &S_{AF}>0,\;R_{AD}>0\\
& & &\id_D\otimes S_{AF}\geq R_{AD}\otimes\id_F\,.
\end{aligned}
\end{align}
Invoking the arithmetic-geometric mean inequality as in the proof of Lemma~\ref{lm:var_renyi} establishes the claim.
\end{proof}

We find that the measured R\'enyi entropy of recovery is superadditive for all R\'enyi parameters.

\begin{prop}\label{thm:super_additive_rel_ent_rec_renyi}
Let $\rho_{AD}\in\cS(AD)$, $\tau_{A'D'}\in\cS(A'D')$, $\sigma_{AE}\in\cS(AE)$, and $\omega_{A'E'}\in\cS(A'E')$ with $\sigma_A,\omega_{A'}>0$. For $\alpha\in(0,1)\cup(1,\infty)$, we have
\begin{align}
D_{\alpha}^{\M, \rec}(\rho_{AD}\otimes\tau_{A'D'}\|\sigma_{AE}\otimes\omega_{A'E'})\geq D_{\alpha}^{\M, \rec}(\rho_{AD}\|\sigma_{AE})+D_{\alpha}^{\M, \rec}(\tau_{A'D'}\|\omega_{A'E'})\,.
\end{align}
\end{prop}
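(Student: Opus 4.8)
The plan is to mimic the proof of Proposition~\ref{thm:super_additive_rel_ent_rec} exactly, but using the R\'enyi dual formula from Lemma~\ref{lem:renyimeas_dual} in place of the relative-entropy dual formula from Lemma~\ref{lem:recovery_measured}. First I would fix $\alpha \in [\frac12,1)$ and take feasible pairs $(R_{AD}, S_{AF})$ for $Q^{\M,\rec}_\alpha(\rho_{AD}\|\sigma_{AE})$ and $(R_{A'D'}, S_{A'F'})$ for $Q^{\M,\rec}_\alpha(\tau_{A'D'}\|\omega_{A'E'})$. Exactly as in~\eqref{eq:feasible}, tensoring the operator inequalities gives
\begin{align}
\id_{DD'}\otimes S_{AF}\otimes S_{A'F'}\geq R_{AD}\otimes R_{A'D'}\otimes\id_{FF'}\,,
\end{align}
using that tensoring with a positive operator is a positive map, so $(R_{AD}\otimes R_{A'D'},\,S_{AF}\otimes S_{A'F'})$ is feasible for $Q^{\M,\rec}_\alpha$ of the tensor-product pair (note $\sigma_{AF}\otimes\sigma_{A'F'}$ is a purification-compatible choice on $AF A'F'$).

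Next I would evaluate the objective on this product feasible point. Since $(R_{AD}\otimes R_{A'D'})^{1-\frac1\alpha} = R_{AD}^{1-\frac1\alpha}\otimes R_{A'D'}^{1-\frac1\alpha}$, multiplicativity of the trace over tensor products gives
\begin{align}
\tr\big[(\rho_{AD}\otimes\tau_{A'D'})(R_{AD}\otimes R_{A'D'})^{1-\frac1\alpha}\big]
= \tr\big[\rho_{AD}R_{AD}^{1-\frac1\alpha}\big]\,\tr\big[\tau_{A'D'}R_{A'D'}^{1-\frac1\alpha}\big]\,,
\end{align}
and likewise $\tr[(S_{AF}\otimes S_{A'F'})(\sigma_{AF}\otimes\sigma_{A'F'})] = \tr[S_{AF}\sigma_{AF}]\,\tr[S_{A'F'}\sigma_{A'F'}]$. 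Hence the product objective for $Q^{\M,\rec}_\alpha$ factorizes as the product of the two individual objectives. For $\alpha\in[\frac12,1)$ the exponent $1-\alpha>0$, so the infimum-form objective of the tensor pair is at most the product of the two chosen-point objectives; taking infima over $(R_{AD},S_{AF})$ and $(R_{A'D'},S_{A'F'})$ separately yields $Q^{\M,\rec}_\alpha(\rho_{AD}\otimes\tau_{A'D'}\|\sigma_{AE}\otimes\omega_{A'E'}) \leq Q^{\M,\rec}_\alpha(\rho_{AD}\|\sigma_{AE})\cdot Q^{\M,\rec}_\alpha(\tau_{A'D'}\|\omega_{A'E'})$. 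Since $D^{\M,\rec}_\alpha = \frac{1}{\alpha-1}\log Q^{\M,\rec}_\alpha$ and $\alpha-1<0$, the $\log$ of this product inequality reverses into the claimed superadditivity $D_\alpha^{\M,\rec}(\rho_{AD}\otimes\tau_{A'D'}\|\cdot) \geq D_\alpha^{\M,\rec}(\rho_{AD}\|\cdot)+D_\alpha^{\M,\rec}(\tau_{A'D'}\|\cdot)$.

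Finally I would handle the remaining ranges. For $\alpha\in(0,\frac12)$ the analogous dual (the "similar dual formula" promised in Lemma~\ref{lem:renyimeas_dual}) is again a minimization with a product-multiplicative objective and a tensor-closed feasible set, and $1-\alpha>0$ and $\alpha-1<0$ again, so the same chain gives the result. For $\alpha\in(1,\infty)$ the dual is a \emph{maximization} with exponent $1-\alpha<0$; on the product feasible point the objective still factorizes, and maximizing over the two pairs separately now gives $Q^{\M,\rec}_\alpha(\rho_{AD}\otimes\tau_{A'D'}\|\cdot)\geq Q^{\M,\rec}_\alpha(\rho_{AD}\|\cdot)\cdot Q^{\M,\rec}_\alpha(\tau_{A'D'}\|\cdot)$, and since now $\alpha-1>0$ the $\log$ preserves direction, again yielding superadditivity of $D^{\M,\rec}_\alpha$. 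The only real subtlety — and the part I would write out carefully — is tracking the signs of $1-\alpha$ versus $\alpha-1$ in each of the three regimes and making sure the inf/sup in the dual and the monotonicity of $x\mapsto x^{1-\alpha}$ and $x\mapsto\frac{1}{\alpha-1}\log x$ all line up so that the factorized inequality points the right way; the operator-inequality tensoring step and the trace multiplicativity are entirely routine, as in Proposition~\ref{thm:super_additive_rel_ent_rec}.
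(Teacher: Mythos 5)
Your proposal is correct and follows essentially the same route as the paper: it invokes the dual formula of Lemma~\ref{lem:renyimeas_dual}, tensors feasible pairs exactly as in~\eqref{eq:feasible}, uses the factorization of the objective to obtain $Q^{\M,\rec}_{\alpha}(\rho_{AD}\otimes\tau_{A'D'}\|\sigma_{AE}\otimes\omega_{A'E'})\leq Q^{\M,\rec}_{\alpha}(\rho_{AD}\|\sigma_{AE})\,Q^{\M,\rec}_{\alpha}(\tau_{A'D'}\|\omega_{A'E'})$ for $\alpha\in[\tfrac12,1)$ (with the reversed inequality in the maximization regime), and converts to superadditivity of $D^{\M,\rec}_{\alpha}$ via the sign of $\alpha-1$, just as the paper does while likewise deferring the remaining ranges to "similar" dual formulas.
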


\begin{proof}
We give the argument for $\alpha\in[\frac12,1)$ based on Lemma~\ref{lem:renyimeas_dual} and note that the proof for $\alpha\in (0,\frac12)\cup(1,\infty)$ is similar. Given feasible operators $R_{AD},S_{AF}$ for the quantity $Q^{\M,\rec}_\alpha(\rho_{AD}\|\sigma_{AE})$ and feasible operators $R_{A'D'},S_{A'F'}$ for the quantity $Q^{\M,\rec}_\alpha(\tau_{A'D'}\|\omega_{A'E'})$, we apply exactly the same argument as in~\eqref{eq:feasible} and find
\begin{align}
Q_{\alpha}^{\M, \rec}(\rho_{AD}\otimes\tau_{A'D'}\|\sigma_{AE}\otimes\omega_{A'E'})\leq Q_{\alpha}^{\M, \rec}(\rho_{AD}\|\sigma_{AE}) \cdot Q_{\alpha}^{\M, \rec}(\tau_{A'D'}\|\omega_{A'E'})\,.
\end{align}
This concludes the proof.
\end{proof}

Analogously, the relative entropy of recovery~\eqref{eq:rel_ent_rec} is generalized to the \emph{R\'enyi relative entropy of recovery} as (see also~\cite[Rmk.~6]{seshadreesan14}),
\begin{align}
&D^{\rec}_\alpha(\rho_{AD}\|\sigma_{AE}):=\frac{1}{\alpha-1} \log Q^{\rec}_{\alpha}(\rho_{AD}\|\sigma_{AE})
\end{align}
with
\begin{align}
Q^{\rec}_{\alpha}(\rho_{AD}\|\sigma_{AE}):=
\begin{cases}
\displaystyle\inf_{\Gamma_{E\to D}} Q_{\alpha}(\rho_{AD}\|(\cI_{A}\otimes\Gamma_{E\to D})(\sigma_{AE}))\quad&\text{for $\alpha \in (1,\infty)$}\\
\displaystyle\sup_{\Gamma_{E\to D}} Q_{\alpha}(\rho_{AD}\|(\cI_{A}\otimes\Gamma_{E\to D})(\sigma_{AE}))&\text{for $\alpha \in (0,1)$}\,,
\end{cases}
\end{align}
where we restrict to $\sigma_A>0$ such that the quantity is surely finite and the infimum/supremum is achieved. Now, for the special case of $\alpha = 1/2$, the sandwiched R\'enyi relative entropy is equal to the negative logarithm of the fidelity and as the measured fidelity is the same as the fidelity (see, e.g., \cite[Sec.~3.3]{fuchs96}), we have
\begin{align}
Q_{1/2}^{\M}=Q_{1/2}\quad\text{as well as}\quad\mathrm{Q}^{\rec}_{1/2}=\mathrm{Q}^{\M,\rec}_{1/2}\,.
\end{align}
The same holds true at $\alpha\to\infty$~\cite{mosonyiogawa13}. As such Proposition~\ref{thm:super_additive_rel_ent_rec_renyi} can be seen as a generalization of the multiplicativity of the fidelity of recovery~\cite{seshadreesan14} that was derived by two of the authors~\cite{berta15}. For the general sandwiched R\'enyi relative entropy of recovery, we only have preliminary numerics indicating that they are not additive. Using a non-commutative extension of the techniques from~\cite[Sec.~3.3.1]{ben-tal01}, we can write the sandwiched R\'enyi relative entropy with fractional R\'enyi parameter as a semidefinite program. Together with the Choi state as in~\eqref{eq:choi_state}, we then get a semidefinite program for, e.g., $D^{\rec}_{2/3}$. Using this, we found weak numerical evidence for additivity violations for small dimensional systems. This also challenges the possible additivity of the relative entropy of recovery $D^{\rec}$ (corresponding to the case $\alpha=1$).\footnote{A very recent preprint~\cite{fawzi17} has demonstrated by numerical examples that the relative entropy of recovery is indeed non-additive.}


\section{Conclusion}\label{sec:conclusion}

We presented variational characterizations of the measured relative entropy and the measured R\'enyi relative entropy. Using these formulas we were able to show that these quantities can be achieved by rank-1 projective measurements instead of general POVMs. We also showed that the measured R\'enyi relative entropy is equal to the corresponding sandwiched R\'enyi relative entropy if and only if the two quantum states commute (except for the special cases with R\'enyi parameters $1/2$ and $\infty$ where they are always equal), and gave analytical counterexamples for the data-processing inequality for the sandwiched R\'enyi relative entropy with R\'enyi parameters smaller than $1/2$. Finally, we applied our variational characterizations to analyze the additivity properties of various relative entropies of recovery.

As an extension of our work it would be desirable to weaken the support conditions in Proposition~\ref{prop:petz} and Theorem~\ref{prop:renyi_equality} concerning the equality conditions for measured relative entropy and relative entropy. (We note that this is non-trivial because we make use of the equality conditions for Golden-Thompson and Araki--Lieb--Thirring). Moreover, it would be neat to extend the concept of measured relative entropy to general, continuous POVMs described by measure spaces (see, e.g., \cite{vanerven14} for the definition of  Kullback-Leibler divergence and R\'enyi divergence on measure spaces). Many of the steps still go through and we would like to point to a Jensen inequality for operator-valued measures~\cite{farenick07}. It seems, however, that an extension of this inequality would be needed~\cite{farenickpc15}. Finally, it would also be interesting to use the variational characterizations presented in this work for studying the operational entropic quantities mentioned in Section~\ref{sec:exploiting}. This might lead to new applications of measured relative entropy in quantum information theory.\\


\paragraph*{Acknowledgments:}

We acknowledge discussions with Fernando Brand\~{a}o, Douglas Farenick and Hamza Fawzi. MB acknowledges funding by the SNSF through a fellowship, funding by the Institute for Quantum Information and Matter (IQIM), an NSF Physics Frontiers Center (NFS Grant PHY-1125565) with support of the Gordon and Betty Moore Foundation (GBMF-12500028), and funding support form the ARO grant for Research on Quantum Algorithms at the IQIM (W911NF-12-1-0521). Most of this work was done while OF was also with the Department of Computing and Mathematical Sciences, California Institute of Technology. MT would like to thank the IQIM at CalTech and John Preskill for his hospitality during the time most of the technical aspects of this project were completed. He is funded by an ARC Discovery Early Career Researcher Award fellowship (Grant No. DE160100821).


\appendix

\section{Proofs for Results in Section~\ref{sec:recovery}}\label{app:missing}

\begin{proof}[Proof of Lemma~\ref{lem:entropy_of_recovery}]
We start by writing out the relative entropy of recovery as a convex optimization program,
\begin{align}
\begin{aligned}
D^{\rec}(\rho_{AD} \| \sigma_{AE})=\tr[\rho_{AD}\log\rho_{AD}]+\;&\textnormal{minimize}
& & -\tr[\rho_{AD} \log\gamma_{AD}]\\
& \textnormal{subject to}
& &\gamma_{AD}=\tr_F\left[\sqrt{\sigma_{AF}}\tau_{ADF}\sqrt{\sigma_{AF}}\right]\\
& & &\tau_{ADF}\geq0\\
& & &\tau_{AF}=\Pi^\sigma_{AF}\,,
\end{aligned}
\end{align}
where the notation is as in the proof of Lemma~\ref{lem:recovery_measured}. Clearly, the first and last constraint can be relaxed to $\gamma_{AD} \leq \tr_F\left[\sqrt{\sigma_{AF}}\tau_{ADF}\sqrt{\sigma_{AF}}\right]$ and $\tau_{AF} \leq \Pi^\sigma_{AF}$ without changing the solution. 
The Lagrangian can then be written as
\begin{align}
\mathcal{L}(\gamma, \tau, R, S) &= - \tr[\rho_{AD} \log \gamma_{AD}]+\tr\left[R_{AD}\left(\gamma_{AD}-\tr_F\left[\sqrt{\sigma_{AF}}\tau_{ADF}\sqrt{\sigma_{AF}}\right]\right)\right]\notag\\
&\quad +\tr\left[S_{AF}\left(\tau_{AF}-\Pi^\sigma_{AF}\right)\right]\\
&= - \tr[\rho_{AD} \log \gamma_{AD}]+\tr[R_{AD}\gamma_{AD}]-\tr\left[\tau_{ADF}\left(\sqrt{\sigma_{AF}}R_{AD}\sqrt{\sigma_{AF}}-S_{AF}\right)\right]\notag\\
&\quad-\tr\left[S_{AF}\Pi^\sigma_{AF}\right]\, ,
\end{align}
where we introduced the variables $R_{AD} \geq 0$ and $S_{AF} \geq 0$.
In order to compute the dual objective function, we should take the infimum of this quantity over $\gamma_{AD}>0$ and $\tau_{ADF} \geq 0$. Using the variational expression for the measured relative entropy (Lemma~\ref{lm:var_rel}) we find
\begin{align}
\inf_{\gamma_{AD}>0,\,\tau_{ADF} \geq 0} \mathcal{L}(\gamma, \theta, R, S) =-D^{\M}(\rho_{AD}\|R_{AD})+1-\tr[S_{AF}\Pi^\sigma_{AF}]\,,
\end{align}
when $S_{AF}\geq\sqrt{\sigma_{AF}}R_{AD}\sqrt{\sigma_{AF}}$. With Slater's strong duality and a change of variable $\bar{S}_{AF}:=\sigma_{AF}^{-1/2}S_{AF}\sigma_{AF}^{-1/2}$ (but not including the bar in the following), we get
\begin{align}
\begin{aligned}
D^{\rec}(\rho_{AD} \| \sigma_{AE})=\tr[\rho_{AD}\log\rho_{AD}]+\;&\textnormal{maximize}
& & -D^{\M}(\rho_{AD}\|R_{AD})+1-\tr[S_{AF}\sigma_{AF}]\\
& \textnormal{subject to}
& &S_{AF}>0,\;R_{AD}>0\\
& & &\id_D\otimes S_{AF}\geq R_{AD}\otimes\id_F\,.
\end{aligned}
\end{align}
Adding the constraint $\tr[S_{AF}\sigma_{AF}]=1$ as in the proof of Lemma~\ref{lem:recovery_measured} concludes the proof.
\end{proof}

\begin{proof}[Proof of Proposition~\ref{lem:interchanged}]
We start by writing out the argument flipped relative entropy of recovery as a convex optimization program,
\begin{align}
\begin{aligned}
\bar{D}^{\rec}(\sigma_{AE}\|\rho_{AD})=\;&\textnormal{minimize}
& & \tr[\gamma_{AD} \log\gamma_{AD}]-\tr[\gamma_{AD} \log\rho_{AD}]\\
& \textnormal{subject to}
& &\gamma_{AD}=\tr_F\left[\sqrt{\sigma_{AF}}\tau_{ADF}\sqrt{\sigma_{AF}}\right]\\
& & &\tau_{ADF}\geq0\\
& & &\tau_{AF}=\Pi^\sigma_{AF}\,,
\end{aligned}
\end{align}
where the notation is as in the proof of Lemma~\ref{lem:recovery_measured}. The Lagrangian can be written as
\begin{align}
\mathcal{L}(\gamma, \tau, T, S) &= \tr[\gamma_{AD} \log \gamma_{AD}]-\tr[\gamma_{AD} \log \rho_{AD}]+\tr\left[T_{AD}\left(\gamma_{AD}-\tr_F\left[\sqrt{\sigma_{AF}}\tau_{ADF}\sqrt{\sigma_{AF}}\right]\right)\right]\notag\\
&\quad+\tr\left[S_{AF}\left(\tau_{AF}-\Pi^\sigma_{AF}\right)\right]\\
&= \tr[\gamma_{AD} \log \gamma_{AD}]+\tr[\gamma_{AD}(T_{AD}-\log\rho_{AD})]\notag\\
&\quad-\tr\left[S_{AF}\Pi^\sigma_{AF}\right]-\tr\left[\tau_{ADF}\left(\sqrt{\sigma_{AF}}T_{AD}\sqrt{\sigma_{AF}}-S_{AF}\right)\right]\,,
\end{align}
where $T_{AD}$ and $S_{AF}$ are Hermitian operators.
In order to compute the dual objective function, we should take the infimum of this quantity over $\gamma_{AD}>0$ and $\tau_{ADF} \geq 0$. From the last expression we get $S_{AF}\geq\sqrt{\sigma_{AF}}T_{AD}\sqrt{\sigma_{AF}}$ and we also know how to optimize the first expression as it is an entropy maximization question:
\begin{align}
\inf_{\gamma_{AD}>0} \tr[\gamma_{AD} \log \gamma_{AD}]+\tr[\gamma_{AD}(T_{AD}-\log\rho_{AD})]\,.
\end{align}
It is optimized when $\gamma_{AD} = \exp(\log \rho_{AD} - T_{AD} + \alpha \id_{AD})$ for some $\alpha$~\cite{chandrasekaran15}. This means that this infimum is given by
\begin{align}
&\inf_{\alpha} \tr[\exp(\log \rho_{AD} - T_{AD} + \alpha \id_{AD}) \left( \log \rho_{AD} - T_{AD} + \alpha \id_{AD} \right)]\notag\\
&\quad+ \tr[\exp(\log \rho_{AD} - T_{AD} + \alpha \id_{AD}) \left( T_{AD} - \log \rho_{AD} \right)]\notag\\
&= \inf_{\alpha} \alpha \tr[\exp(\log \rho_{AD} - T_{AD} + \alpha \id_{AD})] \\
&= \inf_{\alpha} \alpha e^{\alpha} \tr[\exp(\log \rho_{AD} - T_{AD})] \\
&= - \tr[\exp(\log \rho_{AD} -T_{AD} - \id_{AD})]\,.
\end{align}
With Slater's strong duality and a change of variable $\bar{S}_{AF}:=\sigma_{AF}^{-1/2}S_{AF}\sigma_{AF}^{-1/2}$ (but not including the bar in the following), we get
\begin{align}
\begin{aligned}
\bar{D}^{\rec}(\sigma_{AE}\|\rho_{AD})=\;&\textnormal{maximize}
& & -\tr[S_{AF}\sigma_{AF}]-\tr[\exp(\log \rho_{AD} -T_{AD} - \id_{AD})]\\
& \textnormal{subject to}
& &\id_D\otimes S_{AF}\geq T_{AD}\otimes\id_F\,.
\end{aligned}
\end{align}
We now do another change of variable (but not including the bar in the following)
\begin{align}
R_{AD}:= \exp(\log \rho_{AD} - T_{AD} - \id_{AD})\quad\text{as well as}\quad\id_{D}\otimes\bar{S}_{AF}:=\id_{D}\otimes S_{AF}+\id_{ADF}\,,
\end{align}
and the program becomes
\begin{align}
\begin{aligned}
\bar{D}^{\rec}(\sigma_{AE}\|\rho_{AD})=\;&\textnormal{maximize}
& & -\tr[S_{AF}\sigma_{AF}]+1-\tr[R_{AD}]\\
& \textnormal{subject to}
& &R_{AD}>0\\
& & &\id_D\otimes S_{AF}\geq(\log\rho_{AD}-\log R_{AD})\otimes\id_F\,.
\end{aligned}
\end{align}
Observe now that we can add the constraint $\tr[R_{AD}] = 1$. In fact, let
\begin{align}
\bar{R}_{AD}:= \frac{R_{AD}}{\tr[R_{AD}]}\quad\mathrm{and}\quad\bar{S}_{AF}:= S_{AF} + \log \tr[R_{AD}]\,.
\end{align}
This solution satisfies the constraint and the objective value becomes
\begin{align}
-\tr[\bar{S}_{AF}\sigma_{AF}]+1- \tr[\bar{R}_{AD}]= - \tr[S_{AF}\sigma_{AF}] - \log \tr[R_{AD}] \geq - \tr[S_{AF}\sigma_{AF}] + 1 - \tr[R_{AD}]\,.
\end{align}
This concludes the proof of~\eqref{eq:interchanged1}.

To prove~\eqref{eq:interchanged2} first note that it is immediate from the definition of the argument flipped relative entropy of recovery that
\begin{align}
\bar{D}^{\rec}(\sigma_{AE}\otimes\omega_{A'E'}\|\rho_{AD}\otimes\tau_{A'D'})\leq\bar{D}^{\rec}(\sigma_{AE}\|\rho_{AD})+\bar{D}^{\rec}(\omega_{A'E'}\|\tau_{A'D'})\,,
\end{align}
and in the following we prove inequality in the other direction using the dual representation~\eqref{eq:interchanged1}. Given feasible operators $R_{AD},S_{AF}$ for the quantity $\bar{D}^{\rec}(\sigma_{AE}\|\rho_{AD})$ and feasible operators $R_{A'D'},S_{A'F'}$ for the quantity $\bar{D}^{\rec}(\omega_{A'E'}\|\tau_{A'D'})$, we have
\begin{align}
&\id_D\otimes S_{AF}\geq(\log\rho_{AD}-\log R_{AD})\otimes\id_F \land \ \id_{D'}\otimes S_{A'F'}\geq(\log\tau_{A'D'}-\log R_{A'D'})\otimes\id_{F'} \nonumber\\
&\implies\id_{DD'}\otimes(S_{AF}\otimes \id_{A'F'}+\id_{AF}\otimes S_{A'F'})\nonumber\\
&\quad\quad\;\;\geq\big((\log\rho_{AD}-\log R_{AD})\otimes\id_{A'D'}+\id_{AD}\otimes(\log\tau_{A'D'}-\log R_{A'D'})\big)\otimes\id_{FF'}\nonumber\\
&\quad\quad\;\;=\big(\log(\rho_{AD}\otimes\tau_{A'D'})-\log(R_{AD}\otimes R_{A'D'})\big)\otimes\id_{FF'}\,,
\end{align}
just by multiplying with identities and adding the resulting operator inequalities. Moreover, we have $\tr[R_{AD}\otimes R_{A'D'}]=1$. Hence, $(R_{AD}\otimes R_{A'D'},S_{AF}\otimes\id_{A'F'}+\id_{AF}\otimes S_{A'F'})$ is a feasible pair in the expression~\eqref{eq:interchanged1} for $\bar{D}^{\rec}(\sigma_{AE}\otimes\omega_{A'E'}\|\rho_{AD}\otimes\tau_{A'D'})$ and we get
\begin{align}
\bar{D}^{\rec}(\sigma_{AE}\otimes\omega_{A'E'}\|\rho_{AD}\otimes\tau_{A'D'})
&\geq-\tr[(S_{AF}\otimes\id_{A'F'}+\id_{AF}\otimes S_{A'F'})(\sigma_{AF}\otimes\omega_{A'F'})]\\
&=-\tr[S_{AF}\sigma_{AF}]-\tr[S_{A'F'}\omega_{A'F'}]\,.
\end{align}
Taking the supremum over feasible $(R_{AD},S_{AF})$ and $(R_{A'D'},S_{A'F'})$, we find the claimed additivity.
\end{proof}


\bibliographystyle{ultimate}
\bibliography{library}

\begin{thebibliography}{10}

\bibitem{alberti83}
P.~M. Alberti.
\newblock {\em ``{A Note on the Transition Probability over C*-Algebras}''}.
\newblock \href{http://dx.doi.org/10.1007/BF00398708}{Letters in Mathematical
  Physics {\bf 7}:\,25--32} (1983).

\bibitem{araki90}
H.~Araki.
\newblock {\em ``{On an Inequality of Lieb and Thirring}''}.
\newblock \href{http://dx.doi.org/10.1007/BF01045887}{Letters in Mathematical
  Physics {\bf 19}(2):\,167--170} (1990).

\bibitem{beigi13}
S.~Beigi.
\newblock {\em ``{Sandwiched R{\'{e}}nyi Divergence Satisfies Data Processing
  Inequality}''}.
\newblock \href{http://dx.doi.org/10.1063/1.4838855}{Journal of Mathematical
  Physics {\bf 54}(12):\,122202} (2013).

\bibitem{belavkin82}
V.~P. Belavkin and P.~Staszewski.
\newblock {\em ``{C*-algebraic Generalization of Relative Entropy and
  Entropy}''}.
\newblock Annals Henri Poincar{\'{e}} {\bf 37}(1):\,51--58, (1982).

\bibitem{ben-tal01}
A.~Ben-Tal and A.~Nemirovski.
\newblock {\em {Lectures on Modern Convex Optimization}}.
\newblock \href{http://dx.doi.org/10.1137/1.9780898718829}{Society for
  Industrial and Applied Mathematics} (2001).

\bibitem{berta15}
M.~Berta and M.~Tomamichel.
\newblock {\em ``The Fidelity of Recovery Is Multiplicative''}.
\newblock \href{http://dx.doi.org/10.1109/TIT.2016.2527683}{IEEE Transactions
  on Information Theory {\bf 62}(4):\,1758--1763} (2016).

\bibitem{boyd04}
S.~Boyd and L.~Vandenberghe.
\newblock {\em {Convex Optimization}}.
\newblock Cambridge University Press (2004).

\bibitem{brandao14}
F.~G. S.~L. Brand{\~{a}}o, A.~W. Harrow, J.~Oppenheim, and S.~Strelchuk.
\newblock {\em ``{Quantum Conditional Mutual Information, Reconstructed States,
  and State Redistribution}''}.
\newblock \href{http://dx.doi.org/10.1103/PhysRevLett.115.050501}{Physical
  Review Letters {\bf 115}(5):\,050501} (2015).

\bibitem{chandrasekaran15}
V.~Chandrasekaran and P.~Shah.
\newblock {\em ``Relative entropy optimization and its applications''}.
\newblock \href{http://dx.doi.org/10.1007/s10107-016-0998-2}{Mathematical
  Programming {\bf 161}(1):\,1--32} (2017).

\bibitem{hirche15}
T.~Cooney, C.~Hirche, C.~Morgan, J.~P. Olson, K.~P. Seshadreesan, J.~Watrous,
  and M.~M. Wilde.
\newblock {\em ``Operational meaning of quantum measures of recovery''}.
\newblock \href{http://dx.doi.org/10.1103/PhysRevA.94.022310}{Physical Review A
  {\bf 94}(2):\,022310} (2016).

\bibitem{divincenzo04}
D.~DiVincenzo, M.~Horodecki, D.~Leung, J.~Smolin, and B.~Terhal.
\newblock {\em ``{Locking Classical Correlations in Quantum States}''}.
\newblock \href{http://dx.doi.org/10.1103/PhysRevLett.92.067902}{Physical
  Review Letters {\bf 92}(6):\,067902} (2004).

\bibitem{donald86}
M.~J. Donald.
\newblock {\em ``{On the Relative Entropy}''}.
\newblock \href{http://dx.doi.org/10.1007/BF01212339}{Communications in
  Mathematical Physics {\bf 105}(1):\,13--34} (1986).

\bibitem{farenickpc15}
D.~R. Farenick.
\newblock {\em ``{Private Communication}''}, (2015).

\bibitem{farenick07}
D.~R. Farenick and F.~Zhou.
\newblock {\em ``{Jensen's Inequality Relative to Matrix-Valued Measures}''}.
\newblock \href{http://dx.doi.org/10.1016/j.jmaa.2006.05.008}{Journal of
  Mathematical Analysis and Applications {\bf 327}(2):\,919--929} (2007).

\bibitem{fawzi17}
H.~Fawzi and O.~Fawzi.
\newblock {\em ``{Relative entropy optimization in quantum information theory
  via semidefinite programming approximations}''}.
\newblock Preprint, \href{http://arxiv.org/abs/1705.06671}{arXiv:\,1705.06671}
  (2017).

\bibitem{fawzirenner14}
O.~Fawzi and R.~Renner.
\newblock {\em ``{Quantum Conditional Mutual Information and Approximate Markov
  Chains}''}.
\newblock \href{http://dx.doi.org/10.1007/s00220-015-2466-x}{Communications in
  Mathematical Physics {\bf 340}(2):\,575--611} (2015).

\bibitem{frank13}
R.~L. Frank and E.~H. Lieb.
\newblock {\em ``{Monotonicity of a Relative R{\'{e}}nyi Entropy}''}.
\newblock \href{http://dx.doi.org/10.1063/1.4838835}{Journal of Mathematical
  Physics {\bf 54}(12):\,122201} (2013).

\bibitem{fuchs96}
C.~A. Fuchs.
\newblock {\em {Distinguishability and Accessible Information in Quantum
  Theory}}.
\newblock Phd thesis, University of New Mexico, (1996).
\newblock Available at
  \href{http://arxiv.org/abs/quant-ph/9601020v1}{arXiv:\,quant-ph/9601020v1}.

\bibitem{golden65}
S.~Golden.
\newblock {\em ``{Lower Bounds for the Helmholtz Function}''}.
\newblock \href{http://dx.doi.org/10.1103/PhysRev.137.B1127}{Physical Review
  {\bf 137}(4B):\,B1127--B1128} (1965).

\bibitem{hiai94}
F.~Hiai.
\newblock {\em ``{Equality Cases in Matrix Norm Inequalities of Golden-Thompson
  Type}''}.
\newblock \href{http://dx.doi.org/10.1080/03081089408818297}{Linear and
  Multilinear Algebra {\bf 36}(4):\,239--249} (1994).

\bibitem{hiai91}
F.~Hiai and D.~Petz.
\newblock {\em ``{The Proper Formula for Relative Entropy and its Asymptotics
  in Quantum Probability}''}.
\newblock \href{http://dx.doi.org/10.1007/BF02100287}{Communications in
  Mathematical Physics {\bf 143}(1):\,99--114} (1991).

\bibitem{linden08}
B.~Ibinson, N.~Linden, and A.~Winter.
\newblock {\em ``{Robustness of Quantum Markov Chains}''}.
\newblock \href{http://dx.doi.org/10.1007/s00220-007-0362-8}{Communications in
  Mathematical Physics {\bf 277}(2):\,289--304} (2007).

\bibitem{junge15}
M.~Junge, R.~Renner, D.~Sutter, M.~M. Wilde, and A.~Winter.
\newblock {\em ``{Universal recovery from a decrease of quantum relative
  entropy}''}.
\newblock Preprint, \href{http://arxiv.org/abs/1509.07127}{arXiv:\,1509.07127}
  (2015).

\bibitem{koenig08}
R.~K{\"{o}}nig, R.~Renner, and C.~Schaffner.
\newblock {\em ``{The Operational Meaning of Min- and Max-Entropy}''}.
\newblock \href{http://dx.doi.org/10.1109/TIT.2009.2025545}{IEEE Transactions
  on Information Theory {\bf 55}(9):\,4337--4347} (2009).

\bibitem{kosaki86}
H.~Kosaki.
\newblock {\em ``{Relative Entropy of States: A Variational Expression}''}.
\newblock Journal of Operator Theory {\bf 16}(2):\,335--348, (1986).

\bibitem{kullback51}
S.~Kullback and R.~A. Leibler.
\newblock {\em ``{On Information and Sufficiency}''}.
\newblock \href{http://dx.doi.org/10.1214/aoms/1177729694}{Annals of
  Mathematical Statistics {\bf 22}(1):\,79--86} (1951).

\bibitem{liwinter14}
K.~Li and A.~Winter.
\newblock {\em ``{Squashed Entanglement, k-Extendibility, Quantum Markov
  Chains, and Recovery Maps}''}.
\newblock Preprint, \href{http://arxiv.org/abs/1410.4184}{arXiv:\,1410.4184}
  (2014).

\bibitem{liebthirring05}
E.~H. Lieb and W.~E. Thirring.
\newblock {\em ``{Inequalities for the Moments of the Eigenvalues of the
  Schr{\"{o}}dinger Hamiltonian and Their Relation to Sobolev Inequalities}''}.
\newblock In {\em The Stability of Matter: From Atoms to Stars}, chapter III,
  pages 205--239.
  \href{http://dx.doi.org/10.1007/3-540-27056-6{\_}16}{Springer} (2005).

\bibitem{lindblad75}
G.~Lindblad.
\newblock {\em ``{Completely Positive Maps and Entropy Inequalities}''}.
\newblock \href{http://dx.doi.org/10.1007/BF01609396}{Communications in
  Mathematical Physics {\bf 40}(2):\,147--151} (1975).

\bibitem{matsumoto14}
K.~Matsumoto.
\newblock {\em ``{A New Quantum Version of f-Divergence}''}.
\newblock Preprint, \href{http://arxiv.org/abs/1311.4722}{arXiv:\,1311.4722}
  (2014).

\bibitem{mosonyiogawa13}
M.~Mosonyi and T.~Ogawa.
\newblock {\em ``{Quantum Hypothesis Testing and the Operational Interpretation
  of the Quantum R{\'{e}}nyi Relative Entropies}''}.
\newblock \href{http://dx.doi.org/10.1007/s00220-014-2248-x}{Communications in
  Mathematical Physics {\bf 334}(3):\,1617--1648} (2015).

\bibitem{martinthesis}
M.~M{\"{u}}ller-Lennert.
\newblock {\em {Quantum Relative R{\'{e}}nyi Entropies}}.
\newblock Master thesis, ETH Zurich,

\bibitem{lennert13}
M.~M{\"{u}}ller-Lennert, F.~Dupuis, O.~Szehr, S.~Fehr, and M.~Tomamichel.
\newblock {\em ``{On Quantum R{\'{e}}nyi Entropies: A New Generalization and
  Some Properties}''}.
\newblock \href{http://dx.doi.org/10.1063/1.4838856}{Journal of Mathematical
  Physics {\bf 54}(12):\,122203} (2013).

\bibitem{petz86}
D.~Petz.
\newblock {\em ``{Quasi-entropies for Finite Quantum Systems}''}.
\newblock \href{http://dx.doi.org/10.1016/0034-4877(86)90067-4}{Reports on
  Mathematical Physics {\bf 23}(1):\,57--65} (1986).

\bibitem{petz86b}
D.~Petz.
\newblock {\em ``{Sufficient Subalgebras and the Relative Entropy of States of
  a von Neumann Algebra}''}.
\newblock \href{http://dx.doi.org/10.1007/BF01212345}{Communications in
  Mathematical Physics {\bf 105}(1):\,123--131} (1986).

\bibitem{petz88}
D.~Petz.
\newblock {\em ``{A Variational Expression for the Relative Entropy}''}.
\newblock \href{http://dx.doi.org/10.1007/BF01225040}{Communications in
  Mathematical Physics {\bf 114}(2):\,345--349} (1988).

\bibitem{petzbook08}
D.~Petz.
\newblock {\em {Quantum Information Theory and Quantum Statistics}}.
\newblock Springer (2008).

\bibitem{piani09}
M.~Piani.
\newblock {\em ``{Relative Entropy of Entanglement and Restricted
  Measurements}''}.
\newblock \href{http://dx.doi.org/10.1103/PhysRevLett.103.160504}{Physical
  Review Letters {\bf 103}(16):\,160504} (2009).

\bibitem{rains01}
E.~Rains.
\newblock {\em ``{A Semidefinite Program for Distillable Entanglement}''}.
\newblock \href{http://dx.doi.org/10.1109/18.959270}{IEEE Transactions on
  Information Theory {\bf 47}(7):\,2921--2933} (2001).

\bibitem{renyi61}
A.~R{\'{e}}nyi.
\newblock {\em ``{On Measures of Information and Entropy}''}.
\newblock In {\em Proc. 4th Berkeley Symposium on Mathematical Statistics and
  Probability}, volume~1, pages 547--561, Berkeley, California, USA(1961).

\bibitem{seshadreesan14}
K.~P. Seshadreesan and M.~M. Wilde.
\newblock {\em ``{Fidelity of Recovery, Squashed Entanglement, and Measurement
  Recoverability}''}.
\newblock \href{http://dx.doi.org/10.1103/PhysRevA.92.042321}{Physical Review A
  {\bf 92}(4):\,042321} (2015).

\bibitem{sion58}
M.~Sion.
\newblock {\em ``{On General Minimax Theorems}''}.
\newblock Pacific Journal of Mathematics {\bf 8}:\,171--176, (1958).

\bibitem{so92}
W.~So.
\newblock {\em ``{Equality Cases in Matrix Exponential Inequalities}''}.
\newblock \href{http://dx.doi.org/10.1137/0613070}{SIAM Journal on Matrix
  Analysis and Applications {\bf 13}(4):\,1154--1158} (1992).

\bibitem{renner15}
D.~Sutter, O.~Fawzi, and R.~Renner.
\newblock {\em ``Universal recovery map for approximate Markov chains''}.
\newblock \href{http://dx.doi.org/10.1098/rspa.2015.0623}{Proceedings of the
  Royal Society of London A: Mathematical, Physical and Engineering Sciences
  {\bf 472}:\,2186} (2016).

\bibitem{sutter15}
D.~Sutter, M.~Tomamichel, and A.~W. Harrow.
\newblock {\em ``Strengthened Monotonicity of Relative Entropy via Pinched Petz
  Recovery Map''}.
\newblock \href{http://dx.doi.org/10.1109/TIT.2016.2545680}{IEEE Transactions
  on Information Theory {\bf 62}(5):\,2907--2913} (2016).

\bibitem{thompson65}
C.~J. Thompson.
\newblock {\em ``{Inequality with Applications in Statistical Mechanics}''}.
\newblock \href{http://dx.doi.org/10.1063/1.1704727}{Journal of Mathematical
  Physics {\bf 6}(11):\,1812} (1965).

\bibitem{mybook}
M.~Tomamichel.
\newblock {\em {Quantum Information Processing with Finite Resources ---
  Mathematical Foundations}}.
\newblock volume~5 of {\em SpringerBriefs in Mathematical Physics},
  \href{http://dx.doi.org/10.1007/978-3-319-21891-5}{Springer International
  Publishing} (2016).

\bibitem{tomamichel13}
M.~Tomamichel, M.~Berta, and M.~Hayashi.
\newblock {\em ``{Relating Different Quantum Generalizations of the Conditional
  R{\'{e}}nyi Entropy}''}.
\newblock \href{http://dx.doi.org/10.1063/1.4892761}{Journal of Mathematical
  Physics {\bf 55}(8):\,082206} (2014).

\bibitem{uhlmann77}
A.~Uhlmann.
\newblock {\em ``{Relative entropy and the Wigner-Yanase-Dyson-Lieb concavity
  in an interpolation theory}''}.
\newblock \href{http://dx.doi.org/10.1007/BF01609834}{Communications in
  Mathematical Physics {\bf 54}(1):\,21--32} (1977).

\bibitem{umegaki62}
H.~Umegaki.
\newblock {\em ``{Conditional Expectation in an Operator Algebra}''}.
\newblock Kodai Math. Sem. Rep. {\bf 14}:\,59--85, (1962).

\bibitem{vanerven14}
T.~van Erven and P.~Harremoes.
\newblock {\em ``{R{\'{e}}nyi Divergence and Kullback-Leibler Divergence}''}.
\newblock \href{http://dx.doi.org/10.1109/TIT.2014.2320500}{IEEE Transactions
  on Information Theory {\bf 60}(7):\,3797--3820} (2014).

\bibitem{vedral99}
V.~Vedral.
\newblock {\em ``{On Bound Entanglement Assisted Distillation}''}.
\newblock \href{http://dx.doi.org/10.1016/S0375-9601(99)00686-6}{Physics
  Letters A {\bf 262}(2-3):\,121--124} (1999).

\bibitem{vedral98}
V.~Vedral and M.~B. Plenio.
\newblock {\em ``{Entanglement Measures and Purification Procedures}''}.
\newblock \href{http://dx.doi.org/10.1103/PhysRevA.57.1619}{Physical Review A
  {\bf 57}(3):\,1619--1633} (1998).

\bibitem{wilde15}
M.~M. Wilde.
\newblock {\em ``{Recoverability in quantum information theory}''}.
\newblock \href{http://dx.doi.org/10.1098/rspa.2015.0338}{Proceedings of the
  Royal Society A {\bf 471}(2182):\,20150338} (2015).

\bibitem{wilde13}
M.~M. Wilde, A.~Winter, and D.~Yang.
\newblock {\em ``{Strong Converse for the Classical Capacity of
  Entanglement-Breaking and Hadamard Channels via a Sandwiched R{\'{e}}nyi
  Relative Entropy}''}.
\newblock \href{http://dx.doi.org/10.1007/s00220-014-2122-x}{Communications in
  Mathematical Physics {\bf 331}(2):\,593--622} (2014).

\end{thebibliography}

\end{document}